\documentclass[11pt]{article}
\usepackage[left=1in,top=1in,right=1in,bottom=1in,head=0in,nofoot]{geometry}

\setlength{\footskip}{24pt} 
\usepackage{setspace,url,bm,amsmath} 
\usepackage{scalerel}

\usepackage{xcolor}
 \usepackage{multirow}
 \usepackage{arydshln}
 \usepackage{mathtools}
\usepackage{titlesec} 
\titlelabel{\thetitle.\quad} 
\titleformat*{\section}{\bf\Large\center}

\usepackage{authblk}
\usepackage{float}
\usepackage{graphicx} 
\usepackage{bbm}
\usepackage{latexsym}
\usepackage{caption}
\usepackage[margin=20pt]{subcaption}
\usepackage{enumerate}
\graphicspath{ {./Graphs/} }

\usepackage{amsthm}
\usepackage{amssymb}
\usepackage{amsmath}
\usepackage{color}

\usepackage{comment}
\theoremstyle{definition}
\newtheorem{assumption}{Assumption}
\newtheorem*{theorem*}{Theorem}
\newtheorem{theorem}{Theorem}
\newtheorem*{rmk*}{remark}
\newtheorem{proposition}{Proposition}
\newtheorem{lemma}{Lemma}
\newtheorem{example}{Example}

\newtheorem{definition}{Definition}
\newtheorem{remark}{Remark}
\newtheorem{corollary}{Corollary}
\newtheorem*{corollary*}{Corollary}

\usepackage{natbib} 
\bibpunct{(}{)}{;}{a}{}{,} 

\usepackage{etoolbox} 
\apptocmd{\sloppy}{\hbadness 10000\relax}{}{} 

\usepackage{multibib}
\newcites{sec}{References}

\usepackage{color}
\usepackage{listings}
\usepackage{hyperref}
\usepackage{booktabs}

\usepackage{lscape}

\RequirePackage[normalem]{ulem}

\allowdisplaybreaks

\interfootnotelinepenalty=10000

\begin{document}

\singlespacing

\title{Evaluating Surrogates in Individualized Treatment Rules}

\author[1]{Zeyu Xu}
\author[2]{Xiaojie Mao}
\author[3]{Hao Mei}
\author[3]{Yue Liu}

\affil[1]{School of Statistics, Renmin University of China, Beijing 100872, China}
\affil[2]{School of Economics and Management, Tsinghua University, Beijing 100084, China}
\affil[3]{Center for Applied Statistics and School of Statistics, Renmin University of China, Beijing 100872, China}



\date{}

\maketitle

\begin{abstract}
    In many decision-making problems, the primary outcome is expensive, time-consuming, or difficult to observe, so individualized treatment rules (ITRs) may be instead learned from surrogate endpoints. However, a surrogate that is highly associated with the primary outcome, or even satisfies existing surrogate criteria, may not necessarily induce a treatment rule that performs well on the primary outcome, especially under treatment resource budget constraints. In this paper, we develop a principled framework for evaluating the decision-making value of surrogate endpoints. We introduce three ITR-oriented performance measures: surrogate regret, which assesses the expected loss from using the surrogate-optimal ITR instead of outcome-optimal ITR; surrogate gain, which quantifies the benefit of surrogate-optimal ITRs relative to the no-treatment baseline; and surrogate efficiency, which evaluates improvement over random treatment assignment. We also extend them to budget-constrained settings. We propose augmented inverse probability weighted (AIPW) estimators for these measures and establish their large-sample properties. We demonstrate the proposed approach on both simulations and an application to the Criteo dataset.


\end{abstract}

\medskip 
\noindent 
{\bf Keywords}: 
Surrogate Endpoints; Individualized Treatment Rules; Causal Inference; Evaluation Metrics

\newpage

\onehalfspacing

\section{Introduction}
In many decision-making problems, the primary outcome of interest is often expensive, time-consuming, or difficult to observe. Researchers and practitioners therefore rely on surrogate endpoints that can be measured more cheaply, earlier, or at larger scale. Such surrogates are widely used across domains. For example, in online advertising, advertisers typically optimize for clicks rather than conversions, so clicks are used as surrogates when making ad-serving decisions \citep{cheng2021long}. In education, test scores act as short-term surrogates of the long-run effects on earnings of early-childhood interventions \citep{athey2025surrogate}. In medicine, biomarkers such as viral load and CD4 counts in AIDS trials serve as surrogate endpoints for clinical outcomes like mortality, which often require years of follow-up \citep{fleming1994biomarkers}. When treatment decisions are individualized, a natural question is whether a treatment rule learned from a surrogate endpoint can adequately replace the rule that would have been learned from the primary outcome.

The answer is not obvious. The surrogate literature has long recognized that association between a surrogate and the primary outcome can be misleading: a treatment may improve the surrogate and yet worsen the primary outcome even if the surrogate and primary outcome are strongly associated, a phenomenon often referred to as the surrogate paradox \citep{vanderweele2013surrogate, elliott2023surrogate}. Existing surrogate criteria, however, are largely designed for causal effect evaluation rather than treatment allocation decision-making. In individualized treatment rules (ITRs), the relevant question is not simply whether the surrogate is associated with the primary outcome, but whether the treatment rule induced by the surrogate performs well when evaluated on the primary outcome.  This distinction is critical in practice. For instance, in medicine, antibody levels are often used as surrogates because they are believed to be positively correlated with clinical outcomes. However, a treatment rule that maximizes the expected surrogate endpoint may fail to improve, or even harm, the true outcome \citep{Soldatov2025}. Such discrepancies become even more pronounced when treatment resources are limited. Even if the surrogate and the primary outcome agree on the sign of treatment benefit \citep{yang2023targeting}, they may rank individuals differently, leading to substantially different treatment allocations under a budget constraint.

This paper develops a principled framework for evaluating surrogates in ITRs. Rather than asking whether a surrogate is valid for effect estimation, we ask whether it is useful for decision-making. Section \ref{sec:3} presents three counterexamples showing that high observed correlation, high potential-outcome correlation, and even sign preservation \citep{yang2023targeting} need not guarantee that a surrogate-based ITR aligns with the outcome-optimal ITR. Under budget constraints, such discrepancies can be especially severe, and a surrogate-based rule may even perform worse than random assignment.

We formalize surrogate evaluation through measures that are directly tied to decision quality. In the unconstrained setting, we define \emph{surrogate regret}, the expected loss in the primary outcome from using the surrogate-optimal ITR rather than the outcome-optimal ITR. In budget-constrained settings, where at most a fraction $\lambda$ of the population can be treated \citep{bhattacharya2012inferring, matrajt2021vaccine, xu2024optimal}, we define \emph{$\lambda$-surrogate regret}, \emph{$\lambda$-surrogate gain}, and \emph{$\lambda$-surrogate efficiency}. These measures quantify, respectively, the loss relative to the budget-optimal outcome rule, the gain relative to no treatment, and the improvement over random assignment. Together, they provide a direct assessment of the decision value of a surrogate.

We further develop an estimation and inference framework for these measures. A notable feature of our setting is that the primary outcome and the surrogate need not be observed on the same units: our methods allow one dataset containing $(A,X,Y)$ and another containing $(A,X,S)$. This two-sample design is practically relevant when long-term outcomes are available only in a smaller or slower study, whereas surrogate information can be collected more broadly or more quickly in another. We construct augmented inverse probability weighted (AIPW) estimators for the proposed measures and establish their large-sample properties. Our framework can be also extended to the single-sample case.

Empirically, we study the proposed framework in simulations and in an application to the Criteo dataset, where conversion is the primary outcome and exposure and visit are candidate surrogates. The proposed measures provide a direct way to compare candidate surrogates by the quality of the treatment rules they induce, rather than by association-based criteria alone.

The remainder of the paper is organized as follows. Section 2 introduces notation and assumptions. Section 3 presents motivating counterexamples. Section 4 defines the proposed surrogate evaluation measures. Section 5 develops estimators and establishes their large-sample properties. Section 6 reports simulation studies and the Criteo application. Section 7 concludes with extensions and limitations.

\section{Setup}




\subsection{Basic Notation and Assumptions}
For each unit $i$, let $A_i \in \{0,1\}$ denote the treatment assignment ($A_i = 1$ indicates treatment, $A_i = 0$ indicates control), $X_i \in \mathcal X \subseteq \mathbb R^d$ denote baseline covariates measured prior to treatment (e.g., age, gender), $Y_i \in \mathbb{R}$ denote the primary outcome, and $S_i \in \mathbb{R}$ denote the surrogate. Following the potential outcomes framework \citep{neyman1990application, rubin2005potential}, $Y_i(a) \in \mathbb{R}$ represents the potential outcome value for unit $i$ under treatment $a \in \{0, 1\}$, and $S_i(a) \in \mathbb{R}$ represents the corresponding potential surrogate value. As each unit receives either treatment or control, we have $Y_i = \left(1 - A_i\right)Y_i(0) + A_i\,Y_i(1)$ and $S_i = \left(1 - A_i\right)S_i(0) + A_i\,S_i(1)$.

For unit $i$, the individual treatment effect on $Y$ is defined as $\tau_{Y,i} = Y_i(1) - Y_i(0)$. Similarly, the effect on $S$ is defined as $\tau_{S,i} = S_i(1) - S_i(0)$.
The conditional average treatment effect (CATE) on $Y$ is $\tau_Y(x) = E\left[\tau_{Y,i} \mid X_i = x\right] = E\left[Y_i(1) - Y_i(0) \mid X_i = x\right]$, which represents the difference in the conditional expectation of the potential outcomes between receiving and not receiving the treatment. Similarly, the CATE on $S$ is defined as $\tau_S(x) = E\left[\tau_{S,i} \mid X_i = x\right] = E\left[S_i(1) - S_i(0) \mid X_i = x\right]$.

We consider two \emph{disjoint} and \emph{independent} datasets\footnote{{ When only a single dataset $\mathcal{D}= \{(A_i, X_i, S_i,Y_i) \}_{i=1}^m$ is available, our proposed framework can be also extended. We refer to Supplementary Material Section \ref{appendix: B2} for the detailed estimation procedure.}}: the first dataset (referred to as the \emph{outcome dataset}) has observations $\mathcal{D}_1 = \{(A_i, X_i, Y_i): i \in I_1\}$ with sample size $n_1 = |I_1|$, and the second dataset (referred to as the \emph{surrogate dataset}) has observations $\mathcal{D}_2 = \{(A_i, X_i, S_i): i \in I_2\}$ with sample size $n_2 = |I_2|$. We consider the asymptotic regime where $\rho = \lim_{n_1, n_2 \to \infty} \frac{n_1}{n_2} \in (0,\infty)$. We assume that $\{A_i, X_i, S_i(0), S_i(1), Y_i(0), Y_i(1)\}$ for all $i \in I_1 \cup I_2$ are independently and identically distributed from a population distribution ${P}$. 
Throughout, we use unsubscripted variables to denote generic observations drawn from the population distribution.


We impose the following standard causal identification assumptions.

\begin{assumption}[Unconfoundedness]
\label{ass:1}
\(
    A \perp\!\!\!\perp Y(a) \mid X\), 
    \(A \perp\!\!\!\perp S(a) \mid X, \quad \text{for } a \in \{0, 1\}.\)
\end{assumption}

\begin{assumption}[Overlap]
\label{ass:2}
There exists a constant $\epsilon \in (0, 1)$ such that the propensity score $e(X) = P(A = 1 | X)$ satisfies $e(X) \in [\epsilon, 1 - \epsilon]$ almost surely. 
\end{assumption}

The unconfoundedness assumption ensures that treatment assignment is as good as random after conditioning on observed covariates $X$ \citep{imbens2015causal}.
Together with the overlap assumption (which states that every unit has a positive probability of receiving either treatment or control given their covariates), this assumption enables identification of causal effects from observational data. 

We define an ITR as a deterministic map $\pi: {\mathcal{X}} \to \{0,1\}$ from the covariate space to treatment assignment. 
Let $E\left[Y(\pi(X))\right]$ and $E\left[S(\pi(X))\right]$ denote the expected outcomes under an ITR $\pi$ with respect to $Y$ and $S$, respectively.
When there is no treatment allocation budget constraint, the ITR $\pi_Y(X) = \mathbf{1}\left(\tau_Y(X) > 0\right)$ is an outcome-optimal ITR that maximizes the expectation of primary outcome $E\left[Y(\pi(X))\right]$ \citep{Kitagawa2018}. Similarly, $\pi_S(X) = \mathbf{1}\left(\tau_S(X) > 0\right)$ is a surrogate-optimal ITR that maximizes  the expected surrogate $E\left[S(\pi(X))\right]$.


\subsection{Related Literature}

A large body of literature studies surrogate endpoints in the context of causal effect evaluation. To ensure that the treatment effect on a surrogate can reliably capture the treatment effect on the primary outcome, a variety of surrogate criteria have been proposed. The first formalized criterion is the \emph{statistical surrogacy criterion} of \citet{prentice1989surrogate}, which requires the primary outcome to be conditionally independent of the treatment given the surrogate. Subsequent work has developed alternative criteria to address limitations of early frameworks, including the \emph{principal surrogate criterion} \citep{frangakis2002principal}, \emph{strong surrogate criterion} \citep{lauritzen2004discussion}, and \emph{consistent surrogate criterion} \citep{gengzhi2007}, among others. For a comprehensive overview, see \citet{vanderweele2013surrogate}. 



More recent work focuses on quantifying the extent to which a surrogate explains treatment effects. For example, \citet{wang2020model} propose a model-free optimal transformation framework \footnote{{ In Supplementary Material Section \ref{appendix 1}, we use an example to illustrate the differences between our framework and the optimal transformation framework.}} to quantify the proportion of treatment effect explained by a surrogate. \citet{agniel2024robust} develop a method to define and estimate the proportion of the treatment effect explained by a longitudinal surrogate marker, when the primary outcome is censored. Building on \citet{wang2020model}, \citet{wang2023robust} propose to optimally combine multiple markers and enhance surrogacy. These approaches provide principled tools for surrogate evaluation in causal effect estimation.

Our work differs from this line of research in that we focus on surrogate evaluation for individualized treatment rules (ITRs), where the goal is to assess the quality of treatment decisions rather than the validity of effect substitution. 

Another line of related literature studies ITRs, including both policy evaluation and the derivation of optimal treatment rules. Under unconfoundedness, a variety of approaches have been developed for learning optimal ITRs, including regression-based methods, classification-based methods, and approaches based on inverse probability weighting and doubly robust estimation \citep[e.g.,][]{qian2011performance, zhao2012estimating, athey2021policy}. Extensions have been developed to address more complex settings, such as unmeasured confounding using instrumental variables \citep{xu2023instrumental, cui2021semiparametric, chen2023estimating} or proxy variables \citep{shen2023optimal, qi2024proximal}. More recently, \citet{yang2023targeting} propose to use both covariates and surrogates to impute missing long-term outcomes, subsequently approximating the optimal targeting ITR on the imputed outcomes. Additionally, \citet{wu2024policy} develop a framework for learning optimal ITRs that balance long-term and short-term rewards.


While this literature focuses on learning or evaluating treatment rules, our goal is different: we aim to evaluate the value of a surrogate itself for decision-making. We develop evaluation measures that assess surrogates through the quality of the treatment rules they induce on the primary outcome, and we also study their estimation and inference.


\section{A New ``Surrogate Paradox'' in ITRs}
\label{sec:3}

We now study the following question: can surrogates be used to learn individualized treatment rules (ITRs) when the ultimate goal is to maximize the expectation of the primary outcome? The key issue is not simply whether the surrogate endpoint is associated with the primary outcome, but whether the treatment rule induced by the surrogate performs well when evaluated on the primary outcome itself.

This question arises naturally in many applications. For example, in online advertising, advertisers place ads through publisher platforms. Ideally, pricing would be based on realized conversions, which directly quantify returns. However, conversions are often delayed and therefore unavailable at the time of pricing. Publishers instead rely on surrogate endpoints, such as exposure and visit rates, to make pricing or allocation decisions. This creates a potential misalignment. Advertisers seek to maximize actual conversions, whereas publishers may design ITRs that optimize the expectation of these short-term surrogates. Whether such surrogate-optimal decisions also lead to good long-term outcomes is not obvious.

This section examines several intuitive and commonly used notions of surrogate quality and shows that they are insufficient for ensuring good individualized treatment decisions. These notions include observed association between the surrogate and the primary outcome, dependence between their potential outcomes, and agreement in the sign of conditional average treatment effects. While these properties may appear desirable and can be used as heuristic indicators of surrogate quality, we show through concrete examples that none of these properties is sufficient to guarantee good ITRs, especially under budget constraints.

\begin{example}[Observed Outcome Correlation]
\label{ex:simple_correlation}
Let $A$ and $X$ be independent $\text{Bernoulli}(0.5)$ variables, and let $\varepsilon_{S}, \varepsilon_{Y} \sim N(0, 1)$ independently. We specify the structural equation model as:
$$ S = A + \alpha X + \varepsilon_{S}, \quad Y = -A + \alpha X + \varepsilon_{Y}. $$
Under this specification, the CATEs are constant across the population: $\tau_S(x) \equiv 1$ and $\tau_Y(x) \equiv -1$. The correlation coefficient between the observed surrogate $S$ and observed outcome $Y$ is:
$$ \rho(S,Y) = \frac{\alpha^2 - 1}{\alpha^2 + 5}. $$

As $\alpha$ tends to infinity, the correlation coefficient between $S$ and $Y$ approaches 1. However, the treatment effects have completely opposite signs. Since $\tau_S(x)$ is universally positive, the surrogate-optimal ITR would recommend treatment for all units. Conversely, since $\tau_Y(x)$ is universally negative, the outcome-optimal ITR would recommend treatment for none. Consequently, high observed correlation does not justify using the surrogate endpoint $S$ to construct an ITR aimed at maximizing the expectation of the primary outcome $Y$; the resulting ITR may even be uniformly harmful to the primary outcome.
\end{example}
\begin{example}[Potential Outcome Correlation]
\label{ex:potential_correlation}
Let $A$ and $X$ be independent $\text{Bernoulli}(0.5)$ variables, and let $\varepsilon_{S}, \varepsilon_{Y} \sim N(0, 1)$ independently. We specify the structural equation model as:
\begin{align}
S &= \beta X + \left(1-A\right) \alpha + \varepsilon_{S}, \notag\\
Y &= \beta X - \left(1-A\right) \alpha + \varepsilon_{Y}, \notag
\end{align}
where the potential outcomes are given by:
\begin{align}
S(1) &=  \beta X + \varepsilon_{S}, \quad S(0) = \alpha + \beta X + \varepsilon_{S}, \notag\\
Y(1) &= \beta X + \varepsilon_{Y}, \quad Y(0) = -\alpha + \beta X + \varepsilon_{Y}. \notag
\end{align}
The correlation coefficients between potential outcomes are:
$$\rho(S(1), Y(1)) = \rho(S(0), Y(0)) = \frac{\beta^2}{\beta^2 + 4}.$$

As $\beta$ tends to infinity, the correlation coefficients between potential outcomes $(S(a), Y(a))$ approach 1 for $a \in \{0, 1\}$, driven by the dominant common term $\beta X$. However, the treatment effects remain strictly opposite: $\tau_S(x) \equiv -\alpha$ and $\tau_Y(x) \equiv \alpha$. The surrogate-optimal ITR assigns no treatment, while the outcome-optimal ITR assigns universal treatment. This example demonstrates that even a strong potential-outcome relationship between the surrogate and the primary outcome does not guarantee correct treatment decisions.
\end{example}

A more promising candidate is sign preservation \citep{yang2023targeting}, which ensures that the CATEs $\tau_S(x)$ and $\tau_Y(x)$ share the same sign. In unconstrained settings, this condition guarantees that the surrogate-optimal ITR is identical to the outcome-optimal ITR. However, under budget constraints, sign preservation is still insufficient.

\begin{example}[Treatment Effect Sign Preservation]
\label{ex:budget_constraint}
Let $A$ and $X$ be independent $\text{Bernoulli}(0.5)$ variables, and let $\varepsilon_{S}, \varepsilon_{Y} \sim N(0, 1)$ independently. We specify the structural equation model as:
\begin{align}
S &= (1-A)(-\alpha - \beta X) + \varepsilon_{S}, \notag\\
Y &= (1-A)(-\alpha + \beta X) + \varepsilon_{Y}, \notag
\end{align}
with the corresponding CATEs:
$$\tau_{S}(X) = \alpha + \beta X, \quad \tau_{Y}(X) = \alpha - \beta X,$$
where $\alpha > 0$ and $\beta > 0$.


When $\alpha$ is sufficiently large, both CATEs remain uniformly positive. As a result, without budget constraints, both the surrogate-optimal and outcome-optimal ITRs recommend treating everyone. 

However, under budget constraints, even under treatment-effect sign preservation, the expected outcome of the surrogate-optimal ITR can be substantially worse than that of the outcome-optimal ITR, and may even be worse than random assignment.
For instance, when the budget constraint permits treatment of $50\%$ of the population, the outcome-optimal ITR treats units with $X=0$, whereas the surrogate-optimal ITR selects units with $X=1$. Therefore, the surrogate-optimal ITR is even inferior to a random assignment rule.

This example highlights that treatment-effect sign preservation is insufficient for decision-making under budget constraints. What matters in this setting is not only the sign of treatment benefit, but also the ordering of individuals by the magnitude of their benefit.



\end{example}

These three examples show that several intuitive and seemingly strong notions of surrogate quality are insufficient for learning good ITRs. High observed correlation, strong dependence between potential outcomes, and even agreement in treatment-effect signs do not guarantee that the treatment rule induced by the surrogate performs well on the primary outcome. The core difficulty is that ITR learning depends on how well the surrogate preserves the treatment-relevant structure across individuals. In unconstrained settings, this concerns the sign of the treatment effect. Under budget constraints, it also concerns the relative magnitudes and ranking of treatment benefits. These observations motivate the need for surrogate evaluation measures that directly assess the quality of surrogate-induced treatment rules, which we develop in the next section.

\section{Surrogate Evaluation Measures for ITRs}
\label{sec_metirc}

Motivated by the failures illustrated in Section \ref{sec:3}, we now define evaluation measures that directly assess the value of surrogates through the quality of the treatment rules they induce on the primary outcome. 
We consider the unconstrained setting in Section 4.1 and the budget-constrained setting in Section 4.2.


Under Assumption 1 and Assumption 2, the CATEs for both the primary outcome and the surrogate endpoint are identifiable from the observed data. Specifically, the conditional mean counterfactual outcome $E\left[Y(a) \mid X\right]$ can be identified by the outcome regression function $\mu_a(x) = E\left[Y \mid X=x, A=a\right]$ \citep{murphy2003optimal, robins2004optimal}. Similarly, for the surrogate, we identify $E\left[S{(a)} \mid X\right]$ by $\mu_{S,a}(x) = E\left[S \mid X=x, A=a\right]$. Consequently, the CATEs are identified by:
\begin{align}
\tau_Y(x) = \mu_{1}(x) - \mu_{0}(x), ~~ \tau_S(x) = \mu_{S,1}(x) - \mu_{S,0}(x). \notag
\end{align}

We first note that when there are no budget constraints, the ITR $\pi_Y = \mathbf{1}(\tau_Y(X) > 0)$ maximizes the expectation of the primary outcome $Y$, and ITR $\pi_S = \mathbf{1}(\tau_S(X) > 0)$ maximizes the expectation of the surrogate endpoint $S$. 

However, in many applications, only a limited fraction of the population can be treated because of budget, supply, or operational constraints \citep{bhattacharya2012inferring, matrajt2021vaccine, xu2024optimal}. We therefore consider ITRs that satisfy a budget constraint.
When at most a fraction $\lambda$ of the population can be treated, we consider $\lambda$-feasible ITR $\pi$ that satisfies $E\left[\pi(X)\right] \leq \lambda$. An ITR $\pi$ is $\lambda$-optimal if it maximizes the expected outcome among all $\lambda$-feasible ITRs. Let $y_{1-\lambda}$ and $s_{1-\lambda}$ denote the $(1-\lambda)$-quantiles of the treatment effect distributions $\tau_Y(X)$ and $\tau_S(X)$, respectively. Then it can be easily shown\footnote{{ For simplicity, we suppress tie-breaking at the threshold; when $P\{\tau_Y(X)=y_{1-\lambda}\}>0$ or $P\{\tau_S(X)=s_{1-\lambda}\}>0$, any appropriate randomized tie-breaking rule yields a $\lambda$-optimal ITR.}} that the following two ITRs are $\lambda$-optimal ITRs for the primary outcome and surrogate endpoint respectively:
\begin{equation}\label{eq:constr-opt-itr} 
    \begin{aligned}
        \pi_{Y,\lambda}(X) &= \mathbf{1}\{\tau_Y(X) > 0\} \cdot \mathbf{1}\{\tau_Y(X) > y_{1-\lambda}\},  \\
    \pi_{S,\lambda}(X) &= \mathbf{1}\{\tau_S(X) > 0\} \cdot \mathbf{1}\{\tau_S(X) > s_{1-\lambda}\}. 
    \end{aligned}
\end{equation}

In this section, we propose several measures to quantify the suboptimality and value of the surrogate-induced optimal ITR $\pi_S$ and $\pi_{S, \lambda}$ on the primary outcome.

\subsection{Surrogate Regret in the Unconstrained Setting}

A natural first question is how much primary-outcome value is lost when the surrogate-induced ITR \(\pi_S\) is used in place of the outcome-optimal ITR \(\pi_Y\). The following lemma formalizes a simple sufficient condition under which the two rules agree in value.

\begin{lemma} \label{ITR_CATE}
    $\pi_S$ is an optimal ITR for outcome $Y$, i.e., $E\left[ Y(\pi_S(X))\right] =E\left[ Y(\pi_Y(X))\right]$, 
if and only if $E \left[ \mathbf{1}(\tau_Y(X) < 0)\cdot \mathbf{1}(\tau_S(X) > 0)  \right] = E \left[ \mathbf{1}(\tau_Y(X) > 0)\cdot \mathbf{1}(\tau_S(X) \leq 0)  \right] =0$. 
\end{lemma}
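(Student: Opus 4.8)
The plan is to convert both sides of the claimed equivalence into statements about the CATE $\tau_Y$ on the region where the two rules disagree, and then read off the conclusion from the sign of $\tau_Y$ there. First I would express the value of an arbitrary rule through $\tau_Y$. Under the consistency assumption together with unconfoundedness (Assumption 1), with overlap (Assumption 2) guaranteeing the conditional means are well defined, one has $E[Y(a)\mid X]=\mu_a(X)$, so for any measurable $\pi:\mathcal{X}\to\{0,1\}$,
\[
E[Y(\pi(X))] \;=\; E\big[\mu_0(X)\big] + E\big[\pi(X)\,\tau_Y(X)\big].
\]
Specializing to $\pi_Y=\mathbf{1}(\tau_Y(X)>0)$ and $\pi_S=\mathbf{1}(\tau_S(X)>0)$ and subtracting, the common term $E[\mu_0(X)]$ cancels and
\[
E[Y(\pi_Y(X))]-E[Y(\pi_S(X))] \;=\; E\big[(\mathbf{1}(\tau_Y(X)>0)-\mathbf{1}(\tau_S(X)>0))\,\tau_Y(X)\big].
\]

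Next I would split the integrand by the joint sign pattern of $(\tau_Y,\tau_S)$. The two indicators agree on $\{\tau_Y>0,\tau_S>0\}$ and on $\{\tau_Y\le0,\tau_S\le0\}$, contributing nothing, so only the two disagreement regions survive:
\[
E[Y(\pi_Y(X))]-E[Y(\pi_S(X))] = E\big[\mathbf{1}(\tau_Y>0,\tau_S\le0)\,\tau_Y\big] + E\big[\mathbf{1}(\tau_Y\le0,\tau_S>0)\,(-\tau_Y)\big].
\]
Both terms are nonnegative (on the first $\tau_Y>0$, on the second $-\tau_Y\ge0$), which simultaneously reconfirms that $\pi_Y$ is value-maximizing. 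For the ``only if'' direction, equality of the two values forces each term to vanish; on $\{\tau_Y\le0,\tau_S>0\}$ the factor $-\tau_Y$ is strictly positive wherever $\tau_Y<0$, so vanishing of the second term yields $\tau_Y=0$ almost everywhere on that set, and a margin/continuity condition $P(\tau_Y(X)=0)=0$ then upgrades this to $E[\mathbf{1}(\tau_Y\le0,\tau_S>0)]=0$, the stated quantity.

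The step I expect to be the main obstacle is the ``if'' direction, and within it the first disagreement region $\{\tau_Y>0,\tau_S\le0\}$. The hypothesis $E[\mathbf{1}(\tau_Y\le0,\tau_S>0)]=0$ annihilates only the second term of the decomposition, whereas the first term $E[\mathbf{1}(\tau_Y>0,\tau_S\le0)\,\tau_Y]$ is strictly positive whenever $P(\tau_Y>0,\tau_S\le0)>0$ (the surrogate then under-treats units that the outcome rule would treat, strictly losing value). Thus to reach $E[Y(\pi_S(X))]=E[Y(\pi_Y(X))]$ one must also force $P(\tau_Y>0,\tau_S\le0)=0$, which the stated hypothesis does not by itself supply. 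I would therefore look to the surrounding framework for the missing control on this region—most naturally a margin condition paired with a structural link such as $\{\tau_Y>0\}\subseteq\{\tau_S>0\}$—so that the equivalence is ultimately between value optimality and almost-sure agreement of $\pi_S$ and $\pi_Y$. Reconciling the single-region hypothesis as written with this two-sided requirement is the crux I would need to settle, and I would expect the proof either to invoke such an additional assumption or to carry the complementary region $\{\tau_Y>0,\tau_S\le0\}$ alongside the one displayed.
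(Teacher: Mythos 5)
Your decomposition is correct and is the natural route: under Assumptions 1--2 and consistency, $E[Y(\pi(X))] = E[\mu_0(X)] + E[\pi(X)\,\tau_Y(X)]$, so the value gap is the sum of the two nonnegative disagreement terms you display. For what it is worth, the paper offers no proof of this lemma at all (its appendix proves only the Section~5 results), so there is nothing to reconcile your argument against; what your analysis actually uncovers is that the lemma \emph{as stated} is false, and your diagnosis of where it fails is exactly right.

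Concretely, the ``if'' direction fails: take $\tau_Y \equiv 1$ and $\tau_S \equiv -1$ (flip the signs in the paper's own Example~1). The displayed condition holds vacuously because $\{\tau_S(X) > 0\}$ is empty, yet $E[Y(\pi_S(X))] = E[\mu_0(X)]$ while $E[Y(\pi_Y(X))] = E[\mu_0(X)] + 1$. The ``only if'' direction also fails without a margin-type condition, as you note: with $\tau_Y \equiv 0$ and $\tau_S \equiv 1$ the two values coincide, yet $E[\mathbf{1}(\tau_Y(X) \leq 0)\cdot\mathbf{1}(\tau_S(X) > 0)] = 1$. The correct equivalence, which your decomposition proves with no extra assumptions, is
\[
E[Y(\pi_S(X))] = E[Y(\pi_Y(X))]
\iff
E\bigl[\tau_Y(X)\,\mathbf{1}(\tau_Y(X) > 0)\,\mathbf{1}(\tau_S(X) \leq 0)\bigr] = 0
\ \text{ and }\
E\bigl[-\tau_Y(X)\,\mathbf{1}(\tau_Y(X) < 0)\,\mathbf{1}(\tau_S(X) > 0)\bigr] = 0,
\]
i.e.\ $P(\tau_Y(X) > 0,\, \tau_S(X) \leq 0) = 0$ and $P(\tau_Y(X) < 0,\, \tau_S(X) > 0) = 0$. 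Under the additional margin condition $P(\tau_Y(X) = 0) = 0$ this becomes $\pi_S(X) = \pi_Y(X)$ almost surely, equivalently $E\left[\mathbf{1}(\tau_Y(X) \leq 0)\,\mathbf{1}(\tau_S(X) > 0)\right] + E\left[\mathbf{1}(\tau_Y(X) > 0)\,\mathbf{1}(\tau_S(X) \leq 0)\right] = 0$, which is the two-sided version of the paper's display. Note that the paper's downstream use of the lemma (a consistent, sign-preserving surrogate lets one learn $\pi_Y$ from $\mathcal{D}_2$) survives this correction, since sign preservation annihilates both disagreement regions, not only the one the paper writes down.
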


Lemma \ref{ITR_CATE} shows that when the CATEs $\tau_S(X)$ and $\tau_Y(X)$ have the same sign, the surrogate-induced ITR is as good as the outcome-optimal ITR in terms of the expected primary outcome. However, when the signs differ, the surrogate-optimal ITR $\pi_S$ may result in suboptimality loss. 
To quantify this potential loss, we introduce the surrogate regret measure.

\begin{definition}[Surrogate Regret]
\label{def:psr}
The surrogate regret is defined as
\begin{equation}
    R = E\left[Y(\pi_Y(X))\right] - E\left[Y(\pi_S(X))\right],\notag
\end{equation}
which quantifies the expected outcome loss from using the surrogate-optimal ITR $\pi_S(X)$ instead of the outcome-optimal ITR $\pi_Y(X)$. 
\end{definition}

By the law of total expectation and noting that $\tau_Y(X) = E\left[Y(1) - Y(0) \mid X\right]$, the surrogate regret can be equivalently expressed as
\begin{equation}
\label{eq:psr_decomp}
    R = E\left\{\left[Y(1) - Y(0)\right] \cdot \left[\pi_Y(X) - \pi_S(X)\right]\right\} = E\left\{\tau_Y(X) \cdot \left[\pi_Y(X) - \pi_S(X)\right]\right\}.\notag
\end{equation}
Hence, surrogate regret is a weighted measure of sign disagreement between \(\tau_Y(X)\) and \(\tau_S(X)\), where the weight is the magnitude of the treatment effect on the primary outcome.



In the unconstrained setting, this quantity captures the main discrepancy of interest, namely whether the surrogate and the primary outcome induce the same treatment decisions. In the budget-constrained setting, however, one must also account for the budget constraint. We next extend the surrogate regret to budget-constrained settings


\subsection{Surrogate Regret with Budget Constraint}
\label{subsec:4.2}

In the budget-constrained setting, the first quantity of interest remains the loss from using the surrogate-induced rule rather than the outcome-optimal rule.



\begin{definition}[$\lambda$-Surrogate Regret]
\label{def:lambda-sr}
Let $\lambda \in (0, 1]$ be the budget constraint, and let $\pi_{Y,\lambda}$ and $\pi_{S,\lambda}$ be the constrained optimal ITRs defined in eq.\eqref{eq:constr-opt-itr}. The $\lambda$-surrogate regret is defined as
\begin{equation}
    R(\lambda) = E\left[Y(\pi_{Y, \lambda}(X))\right] - E\left[Y(\pi_{S, \lambda}(X))\right].\notag
\end{equation}
When $\lambda = 1$, this reduces to the surrogate regret $R$ in Definition \ref{def:psr}.
\end{definition}

Similar to Definition \ref{def:psr}, the $\lambda$-surrogate regret can be equivalently expressed as
\begin{equation}
    R(\lambda) = E\left\{\tau_Y(X)\left[\pi_{Y, \lambda}(X) - \pi_{S, \lambda}(X)\right] \right\}.\notag
\end{equation}

The $\lambda$-surrogate regret quantifies the loss from using a $\lambda$-surrogate-optimal ITR instead of a $\lambda$-outcome-optimal ITR. 
However, regret alone does not reveal whether the surrogate-induced ITR achieves a practically meaningful benefit. A small regret can still correspond to a ITR with negligible value if the overall treatment benefit is itself small. This motivates the next two measures.


\subsection{$\lambda$-Surrogate Gain and $\lambda$-Surrogate Efficiency}


To assess the absolute benefit achieved by the surrogate-induced ITR, we compare it with the ITR of treating no one.

\begin{definition}[$\lambda$-Surrogate Gain]
Let $\lambda \in (0, 1]$ be the budget constraint, the $\lambda$-surrogate gain is defined as:
$$G(\lambda) = E\left[Y(\pi_{S, \lambda}(X))\right] - E\left[Y(0)\right],$$
\end{definition}

Similar to Definition \ref{def:psr}, the $\lambda$-surrogate gain can be equivalently expressed as 
$$G(\lambda) = E\left[ \tau_Y(X)\, \pi_{S, \lambda}(X) \right].$$

The quantity \(G(\lambda)\) measures the improvement achieved by using the surrogate-induced \(\lambda\)-feasible ITR relative to no treatment. It complements \(R(\lambda)\) by quantifying the actual gain delivered by the surrogate-induced rule on the primary outcome.

To further capture the value of the surrogate, we also introduce the 
$\lambda$-surrogate efficiency, which quantifies the improvement over budget-matched random treatment rule. It reflects the informational value of the surrogate for targeting treatment under the budget constraint.



\begin{definition}[$\lambda$-Surrogate Efficiency]
Let $\lambda \in (0, 1]$ be the budget constraint, $\pi_{\lambda}(X)$ treats each individual independently with probability $\lambda$, regardless of $X$. The $\lambda$-surrogate efficiency is defined as:
$$V(\lambda) = E\left[Y(\pi_{S,\lambda}(X))\right] - E\left[Y(\pi_{\lambda}(X))\right].$$
\end{definition}

Similar to Definition \ref{def:psr}, this measure can be rewritten as:
$$V(\lambda) = E\left\{ \tau_Y(X) \left[ \pi_{S,\lambda}(X) - \lambda \right]\right\}.$$

The three budget-constrained measures capture different but related aspects of surrogate performance. Together, they provide complementary summaries of the quality of the treatment rules induced by the surrogate on the primary outcome.




\section{Estimation and Inference}
\label{sec:5}
In this section, we propose estimators for the surrogate evaluation measures in Sections \ref{sec_metirc}. All estimators follow a unified augmented inverse propensity weighting (AIPW) framework in Section 5.1, with specific adaptations for each measure. 
However, the non-smooth indicator functions in ITR definitions complicate the inference.
To establish asymptotic properties, we introduce some margin conditions.
Section 5.2 develops estimators for surrogate regret and $\lambda$-surrogate regret. Sections 5.3 and 5.4 construct estimators for $\lambda$-surrogate gain and $\lambda$-surrogate efficiency. 

We first introduce notation for the subsequent analysis. For a random variable $Z$, we employ the essential supremum norm $\|Z\|_{\infty} = \inf\{M: P(|Z| > M) = 0\}$ and the $L_2$ norm $\|Z\|_2 = \{E(Z^2)\}^{1/2}$. These norms quantify errors of nuisance estimators, which we estimate via sample splitting. For the asymptotic analysis, we adopt the standard stochastic order notation: for  a random variable sequence $Z_N$, we write $Z_N = O_{P}(a_N)$ if, for any $\varepsilon > 0$, there exists finite $M > 0$ such that $P\left(|Z_N/a_N| > M\right) < \varepsilon$. Similarly, $Z_N = o_{P}(a_N)$ if $P\left(|Z_N/a_N| > \varepsilon\right) \to 0$ as $N \to \infty$.

\subsection{Constructing AIPW estimators}
\label{sec:5.1}

Consider the measures defined in Section \ref{sec_metirc}, including the surrogate regret $R$, the $\lambda$-surrogate regret $R(\lambda)$, the $\lambda$-surrogate gain $G(\lambda)$, and the $\lambda$-surrogate efficiency $V(\lambda)$. Formally, each of these measures can be expressed as a common target estimand $\theta$:
\begin{equation}
\theta = E\left[\phi(Y, A, X; \eta)\right], \notag
\end{equation} 
where $\phi(Y, A, X; \eta)$ is a specific estimating function depending on the observed data and a collection of nuisance parameters $\eta$. The specific form of $\phi$ varies according to the target measure $\theta$ and will be detailed in the subsequent subsections.

While the functional form of $\phi$ and the specific subset of nuisance parameters $\eta$ vary across different measures, all required nuisance parameters are from a common set:
\begin{equation}
\left\{\mu_0, \mu_1, \mu_{S,0}, \mu_{S,1}, e, \pi_Y, \pi_S, \pi_{Y,\lambda}, \pi_{S,\lambda}, y_{1-\lambda}, s_{1-\lambda}\right\}. \notag
\end{equation}

Here, $\mu_a$ and $\mu_{S,a}$ denote the conditional mean functions for the primary outcome and the surrogate endpoint, respectively; $e$ denotes the propensity score. The set includes parameters for both unconstrained and budget-constrained scenarios: $\pi_Y$ and $\pi_S$ represent the outcome-optimal ITR and surrogate-optimal ITR, whereas $\pi_{Y,\lambda}$ and $\pi_{S,\lambda}$ denote the $\lambda$-optimal ITRs. These $\lambda$-optimal ITRs are jointly determined by the CATEs $\tau_Y$ and $\tau_S$ and their respective quantiles $y_{1-\lambda}$ and $s_{1-\lambda}$.



\begin{definition}[Sample-Splitting Estimator]
\label{def:estimator}
Recall that we have the outcome dataset $\mathcal{D}_1 = \{(A_i, X_i, Y_i): i \in I_1\}$. First, we randomly partition the index set $I_1$ into two equal-sized disjoint subsets, $I_{1,1}$ and $I_{1,2}$, and let $n = \left|I_{1,1}\right|$ denote the size of the estimation sample. Next, we utilize the combination of the nuisance estimation fold $\mathcal{D}_{1,2} = \{(A_i, X_i, Y_i): i \in I_{1,2}\}$ and the surrogate dataset $\mathcal{D}_2$ to estimate the nuisance parameters $\hat{\eta}$, which vary across different target measures as will be explained later. Finally, the estimator $\hat{\theta}$ is computed on the main sample $\mathcal{D}_{1,1} = \{(A_i, X_i, Y_i): i \in I_{1,1}\}$ as:
\begin{equation}
\hat{\theta} = E_{I_{1,1}} \left[\phi(Y_i, A_i, X_i; \hat{\eta})\right] = n^{-1} \sum_{i \in I_{1, 1}} \phi(Y_i, A_i, X_i; \hat{\eta}).
\notag
\end{equation}
\end{definition}
For simplicity, our main text focuses on  the sample splitting procedure in Definition \ref{def:estimator}. Its finite-sample performance can be improved by implementing a cross-fitting procedure \citep{chernozhukov2018double}. Cross-fitting randomly splits the data into $K$ disjoint folds, using the data in all but one fold to estimate nuisance parameters and applying them only to the specific hold-out fold, and finally averaging the estimates across all folds. The extension is detailed in Supplementary Material Section \ref{appendix: B1}. In scenarios where only a single dataset $\mathcal{D}$ is available, we adapt the proposed estimation framework as described in Supplementary Material Section \ref{appendix: B2}.

\subsection{Surrogate Regret and $\lambda$-Surrogate Regret}

We propose an AIPW estimator for the surrogate regret $R$. The main challenge in estimating $R$ is its non-smoothness. Without additional assumptions, $R$ is not pathwise differentiable due to the indicator functions in the outcome-optimal ITR $\pi_Y$ and the surrogate-optimal ITR $\pi_S$. When the CATEs $\tau_Y(X)$ or $\tau_S(X)$ have point mass at decision thresholds, the resulting discontinuities complicate asymptotic analysis \citep{Levis2024}.

To address this, we introduce margin conditions that limit the probability mass near the thresholds. These conditions ensure desirable convergence properties for the proposed estimators.

\begin{assumption}[margin conditions]
\label{ass:margin-3}
For the CATEs $\tau_Y$ and $\tau_S$, we assume:
\vspace*{-8pt}
\begin{enumerate}
\item[(a)] For some $\alpha_1 > 0$, $P(|\tau_Y(X)| \leq t) = O(t^{\alpha_1})$ for all $t \ge 0$.
\item[(b)] For some $\alpha_2 > 0$, $P(|\tau_S(X)| \leq t) = O(t^{\alpha_2})$ for all $t \ge 0$.
\item[(c)] For some $\beta_1 > 0$, $P(|\tau_Y(X) - y_{1-\lambda}| \leq t) = O(t^{\beta_1})$ for all $t \ge 0$.
\item[(d)] For some $\beta_2 > 0$, $P(|\tau_S(X) - s_{1-\lambda}| \leq t) = O(t^{\beta_2})$ for all $t \ge 0$.
\end{enumerate}
\end{assumption}





\renewcommand{\theassumption}{\arabic{assumption}}
The conditions in Assumption \ref{ass:margin-3} are analogous to those in the classification literature \citep{tsybakov2004optimal, audibert2007fast}, and in causal inference and ITR estimation \citep{qian2011performance, luedtke2016statistical, kennedy2020optimal, kallus2022harm, dadamo2021doubly, levis2023covariate, benmichael2024policy}. 
These conditions ensure that the distributions of $\tau_Y$ and $\tau_S$ do not concentrate too heavily. Conditions (a)-(b) rule out excessive density near zero, while conditions (c)-(d) prevent concentration near the quantiles $y_{1-\lambda}$ and $s_{1-\lambda}$.
Practically, this assumption rules out scenarios where the CATE concentrates heavily at the decision boundary, which would make the hard thresholding in optimal ITRs particularly unstable.

To estimate the regret $R$, we construct an estimating function based on the AIPW structure:
\[
\phi_R(Y, A, X;\, \eta) = \left[\pi_Y(X) - \pi_S(X)\right]
\left\{\frac{A}{e(X)} - \frac{1-A}{1-e(X)} \right\}
\left[Y - \mu_{A}(X)\right] + \tau_Y(X)\left[\pi_Y(X) - \pi_S(X)\right],
\]
where $\eta = \left( \mu_0, \mu_1, \mu_{S,0}, \mu_{S,1}, e, \pi_Y, \pi_S \right)$ represents the corresponding collection of nuisance parameters. 
We estimate these parameters using the sample-splitting strategy outlined in Definition \ref{def:estimator}. Specifically, the propensity score estimator $\hat{e}$ and surrogate regression functions $\hat{\mu}_{S,a}$ ($a=0,1$) are estimated using $\mathcal{D}_2$, while the outcome regression functions $\hat{\mu}_a$ ($a=0,1$) are obtained from $\mathcal{D}_{1,2}$. Based on these, we construct the CATE estimators as $\hat{\tau}_Y(X) = \hat{\mu}_1(X) - \hat{\mu}_0(X)$ and $\hat{\tau}_S(X) = \hat{\mu}_{S,1}(X) - \hat{\mu}_{S,0}(X)$. The corresponding estimated ITRs are then given by:
\begin{equation}
\hat{\pi}_{Y}(X) = \mathbf{1}\{\hat{\tau}_Y(X) > 0\},\ \hat{\pi}_{S}(X) = \mathbf{1}\{\hat{\tau}_S(X) > 0\}, \notag
\end{equation}
According to Definition \ref{def:estimator}, with $\hat{\eta}$ estimated, our proposed estimator is:
$$\hat{R} = E_{I_{1,1}}\left[ \phi_R(Y, A, X;\, \hat{\eta}) \right].$$
The following lemma characterizes the bias introduced by substituting $\eta$ with $\hat{\eta}$.

\begin{lemma}
\label{lemma2}
    We have
    \begin{equation}
        \begin{split}
            &E\left[\phi_R(Y, A, X;\, \hat{\eta}) - \phi_R(Y, A, X;\, \eta)\mid X\right] \\
            &=\, \underbrace{\left[\hat{\pi}_{Y}(X) - \hat{\pi}_{S}(X)\right] \left\{ 
            \frac{\hat{e}(X) - e(X)}{\hat{e}(X)}\left[\hat{\mu}_1(X) - \mu_1(X)\right] + 
            \frac{\hat{e}(X) - e(X)}{1 - \hat{e}(X)}\left[\hat{\mu}_0(X) - \mu_0(X)\right]\right\}}_{I} \\
            &\quad + \underbrace{\tau_Y(X)
            \left\{ [\hat{\pi}_{Y}(X) - \pi_{Y}(X)] - 
            [\hat{\pi}_{S}(X) - \pi_{S}(X)] \right\}}_{II}. \notag
        \end{split}
    \end{equation}
\end{lemma}

This lemma decomposes the bias in our estimating function. Term I is the product of estimation errors in the propensity score and outcome regression functions. This product structure makes Term I asymptotically negligible if either component is consistently estimated. Term II, however, is a first-order bias term that requires careful control through margin conditions. Building on this decomposition, we establish the convergence rate of $\hat{R}$ to $R$, and derive sufficient conditions for asymptotic normality.
\begin{theorem}
\label{thm: thm1}
Assume that $\|\tau_Y(X)\|_{\infty} < \infty$ and $E(Y^2) < \infty$.
Moreover, assume that $\|\hat{\mu}_0 - \mu_0\|_2 + \|\hat{\mu}_1 - \mu_1\|_2 + \|\hat{\mu}_{S, 0} - \mu_{S, 0}\|_2 + \|\hat{\mu}_{S, 1} - \mu_{S, 1}\|_2 +
\|\hat{e} - e\|_2 = o_{P}(1)$. Then, under the $\text{Assumption } \ref{ass:margin-3}$(a) and $\ref{ass:margin-3}$(b), 
\[
\hat{R} - R = O_{P}(n^{-1/2} + D_{1, n} + D_{2, n} + D_{3, n}),
\]

where 
\[D_{1, n} = \left\|\hat{e} - e\right\|_2 \cdot 
    \left(\left\|\hat{\mu}_1 - \mu_1\right\|_2+ \left\|\hat{\mu}_0 - \mu_0\right\|_2\right),\] 
\[
D_{2, n} = \| \tau_Y - \hat{\tau}_Y \|^{1 + \alpha_1}_{\infty},\text{ and } D_{3, n} = \|\tau_S - \hat{\tau}_S\|_{\infty}^{\alpha_2}
.\]
\end{theorem}

\begin{proposition}
If $D_{1, n} + D_{2, n} + D_{3, n} = o_{P}(n^{-1/2})$, then
\[
n^{1/2}\left(\hat{R} - R\right) \xrightarrow{d} N(0, \sigma_R^2),
\]
where 
$\sigma_R^2 = \operatorname{Var}\left[\phi_R(Y, A, X;\, \eta)\right]$.
\end{proposition}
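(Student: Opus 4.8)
The plan is to upgrade the rate bound of Theorem~\ref{thm:thm1} to a distributional statement by making the asymptotically linear expansion of $\hat R$ explicit and isolating its leading term. Writing $R = E[\phi(Y,A,X;\eta)]$ and recalling that under the sample-splitting scheme $\hat\eta$ is estimated on the auxiliary sample and is therefore fixed relative to the main-sample empirical measure $E_n$, I would decompose
\begin{equation}
\hat R - R = \underbrace{(E_n - E)[\phi(\cdot;\eta)]}_{T_1} + \underbrace{(E_n - E)[\phi(\cdot;\hat\eta) - \phi(\cdot;\eta)]}_{T_2} + \underbrace{E[\phi(\cdot;\hat\eta) - \phi(\cdot;\eta)]}_{T_3},
\end{equation}
where the expectation $E$ in $T_2$ and $T_3$ is taken conditional on the auxiliary sample. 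The goal is to show $\sqrt n\,T_1 \xrightarrow{d} N(0,\sigma_R^2)$ while $\sqrt n\,T_2$ and $\sqrt n\,T_3$ are both $o_{\mathbb P}(1)$; aggregating the cross-fitting folds so that the leading term is an empirical average of $\phi(\cdot;\eta)$ over all $n$ observations of $\mathcal D_1$, Slutsky's theorem then delivers the claim with the stated variance.

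For $T_1$, the summands $\phi(Y_i,A_i,X_i;\eta)$ are i.i.d.\ with mean $R$ and variance $\sigma_R^2 = \operatorname{Var}[\phi(\cdot;\eta)]$, which is finite by the overlap assumption (which bounds $A/e(X)$ and $(1-A)/(1-e(X))$), the boundedness $|\tau_Y|\le M$, and a finite-second-moment condition on $Y$; the Lindeberg--L\'evy central limit theorem then gives $\sqrt n\,T_1 \xrightarrow{d} N(0,\sigma_R^2)$. For $T_2$, I would exploit the independence afforded by sample splitting: conditional on the auxiliary sample, $T_2$ has mean zero and conditional variance of order $n^{-1}\|\phi(\cdot;\hat\eta) - \phi(\cdot;\eta)\|_2^2$, so it suffices to establish $\|\phi(\cdot;\hat\eta) - \phi(\cdot;\eta)\|_2 = o_{\mathbb P}(1)$, after which a conditional Chebyshev inequality yields $\sqrt n\,T_2 = o_{\mathbb P}(1)$. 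This $L_2$-consistency reduces to consistency of each nuisance component: the regression, propensity, and CATE errors vanish under the hypotheses carried over from Theorem~\ref{thm:thm1}, while the non-smooth indicator components satisfy $\|\hat\pi_Y - \pi_Y\|_2^2 = P(\hat\pi_Y \neq \pi_Y) \le P(|\tau_Y(X)| \le \|\hat\tau_Y - \tau_Y\|_\infty) = O_{\mathbb P}(\|\hat\tau_Y - \tau_Y\|_\infty^{\alpha_1}) = o_{\mathbb P}(1)$ by margin condition (a), and analogously for $\hat\pi_S$ via (b).

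The term $T_3$ is precisely the bias analyzed in the proof of Theorem~\ref{thm:thm1}. Integrating the conditional expansion of Lemma~\ref{lemma2} over $X$, its second line yields the doubly robust cross-product, which is $O_{\mathbb P}(R_{1,n})$ by Cauchy--Schwarz, while its first line $E[\tau_Y(X)\{(\hat\pi_Y - \pi_Y) - (\hat\pi_S - \pi_S)\}]$ is the genuinely non-smooth piece. On $\{\hat\pi_Y \neq \pi_Y\}$ one has $|\tau_Y| \le \|\hat\tau_Y - \tau_Y\|_\infty$, so margin condition (a) bounds $|E[\tau_Y(\hat\pi_Y - \pi_Y)]|$ by $O_{\mathbb P}(R_{2,n})$, whereas for the $\pi_S$ piece the factor in the expectation is $\tau_Y$ (not $\tau_S$), so one instead uses $|\tau_Y|\le M$ together with margin condition (b) on $\tau_S$ to bound $|E[\tau_Y(\hat\pi_S - \pi_S)]|$ by $O_{\mathbb P}(R_{3,n})$. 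Hence $|T_3| = O_{\mathbb P}(R_{1,n} + R_{2,n} + R_{3,n}) = o_{\mathbb P}(n^{-1/2})$ under the hypothesis of the proposition, giving $\sqrt n\,T_3 = o_{\mathbb P}(1)$. I expect the main obstacle to be exactly this control of $T_3$: the discontinuity of the ITR indicators obstructs the usual pathwise-differentiability route, and the $\pi_S$ contribution must be handled by trading the absent factor $\tau_S$ for the boundedness of $\tau_Y$ while invoking the margin condition governing the \emph{other} CATE---a mismatch that explains why only $\|\hat\tau_S - \tau_S\|_\infty^{\alpha_2}$, rather than a higher power, can be extracted, and which is the delicate step requiring care.
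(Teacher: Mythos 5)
Your proposal is correct and follows essentially the same route as the paper: the identical three-term decomposition into the empirical process of the true influence function (CLT term), the cross term controlled via sample-splitting and conditional Chebyshev (the paper's Lemma S1, from Kennedy et al.), and the conditional bias from Lemma 2 bounded by $R_{1,n}$ via Cauchy--Schwarz and by $R_{2,n}$, $R_{3,n}$ via the margin conditions (the paper's Lemma S4), concluding with Slutsky. Your handling of the $\pi_S$ term—trading the absent $\tau_S$ factor for boundedness of $\tau_Y$, which is why only the power $\alpha_2$ appears—is exactly the paper's argument.
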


In this decomposition, the term $D_{1,n}$ shares a similar error-product structure as term I in Lemma \ref{lemma2}. Consequently, it vanishes asymptotically if either the propensity score or the outcome regression is consistently estimated. The term $D_{2,n}$ reflects the complexity inherent in estimating optimal values, as previously studied by 
\citet{luedtke2016statistical}. Finally, $D_{3,n}$ is a bias term unique to our surrogate evaluation framework. It arises from the interaction $\tau_Y \cdot (\hat{\pi}_S - \pi_S)$ between the primary outcome effect and the error in estimating the surrogate-optimal ITR, and vanishes when the margin parameter $\alpha_2$ is sufficiently large.

The analysis of the $\lambda$-surrogate regret follows a similar theoretical foundation, with additional quantile estimation. We define the corresponding estimating function as follows:
\begin{equation}
    \begin{split}
        \phi_{R, \lambda}(Y, A, X;\, \eta) = 
        \,&\tau_Y(X)
        \left[\pi_{Y, \lambda}(X) - \pi_{S, \lambda}(X)\right] + 
        \\
        &\left[\pi_{Y, \lambda}(X) - \pi_{S, \lambda}(X)\right]
        \left\{\frac{A}{e(X)} - \frac{1-A}{1-e(X)} \right\}
        \left[Y - \mu_A(X)\right],
    \end{split}\notag
\end{equation}
where $\lambda \in (0, 1]$, $\eta = \left( \mu_0, \mu_1, \mu_{S,0}, \mu_{S, 1}, e, \pi_{Y, \lambda}, \pi_{S, \lambda}, y_{1-\lambda}, s_{1-\lambda} \right)$ represents the nuisance parameters for the budget-constrained setting. 

The estimation of $\eta$  involves shared nuisance parameters as in the unconstrained case, including the propensity score estimator $\hat{e}$, regression functions $\hat{\mu}_a,\ \hat{\mu}_{S, a}$ ($a=0,1$), from which the CATE estimators $\hat{\tau}_Y$ and $\hat{\tau}_S$ are derived. These components are obtained via the sample-splitting strategy described previously. With these components in place, we proceed to estimate the thresholding parameters required for the budget constraint. Specifically, $\hat{y}_{1-\lambda}$ is estimated as the empirical $(1-\lambda)$-quantile of $\hat{\tau}_Y$:
\begin{equation}
\hat{y}_{1-\lambda} = \inf\left\{t \in \mathbb{R}: \left|I_2\right|^{-1}\sum_{i \in I_2} \mathbf{1}\{\hat{\tau}_Y(X_i) \leq t\} \geq 1-\lambda\right\}. \notag
\end{equation}
Conversely, as $\hat{\tau}_S$ is constructed using $\mathcal{D}_2$, we estimate its corresponding threshold $\hat{s}_{1-\lambda}$ utilizing $\mathcal{D}_{1,2}$:
\begin{equation}
\hat{s}_{1-\lambda} = \inf\left\{t \in \mathbb{R}: \left|I_{1,2}\right|^{-1}\sum_{i \in I_{1,2}} \mathbf{1}\{\hat{\tau}_S(X_i) \leq t\} \geq 1-\lambda\right\}. \notag
\end{equation}
The resulting estimated ITR is given by:
$$
\hat{\pi}_{Y,\lambda}(X) = \mathbf{1}\{\hat{\tau}_Y(X) > \hat{y}_{1-\lambda}\}\cdot\mathbf{1}\{\hat{\tau}_Y(X) > 0\},\ 
\hat{\pi}_{S,\lambda}(X) = \mathbf{1}\{\hat{\tau}_S(X) > \hat{s}_{1-\lambda}\}\cdot\mathbf{1}\{\hat{\tau}_S(X) > 0\}. \notag
$$
Finally, combining these threshold estimators with the previously obtained nuisance parameters, the proposed estimator of $R(\lambda)$ is formulated as:
\[
\hat{R}(\lambda) = E_{I_{1,1}}\left[ \phi_{R, \lambda}(Y, A, X;\, \hat{\eta}) \right].
\]
\begin{theorem}
\label{thm: thm2}
Assume that $\|\tau_Y(X)\|_{\infty} < \infty$ and $E(Y^2) < \infty$. Moreover, assume that $\|\hat{\mu}_0 - \mu_0\|_2 + \|\hat{\mu}_1 - \mu_1\|_2  + \|\hat{\mu}_{S, 0} - \mu_{S, 0}\|_2 + \|\hat{\mu}_{S, 1} - \mu_{S, 1}\|_2 + \|\hat{e} - e\|_2
+ |y_{1-\lambda} - \hat{y}_{1-\lambda}| + |s_{1-\lambda} - \hat{s}_{1-\lambda}|
= o_{P}(1)$. Then, for any $\lambda \in (0, 1]$, under the $\text{Assumption } \ref{ass:margin-3}$,
\[
\hat{R}(\lambda) - R(\lambda) = O_{P}(n^{-1/2} + D_{1, n} + D_{2, n} + D_{3, n} + D_{4, n} + D_{5, n}),
\]

where $D_{1, n}$, $D_{2, n}$, and $D_{3, n}$ are defined as in Theorem \ref{thm: thm1},
\[D_{4, n} = \left(\|\tau_Y - \hat{\tau}_Y\|_{\infty} + 
        |y_{1-\lambda} - \hat{y}_{1-\lambda}|\right)^{\beta_1}\text{,}
\]
\[
D_{5, n} = \left(\|\tau_S - \hat{\tau}_S\|_{\infty} + 
        |s_{1-\lambda} - \hat{s}_{1-\lambda}|\right)^{\beta_2}.
\]
\end{theorem}

\begin{proposition}
If, in addition, $D_{1, n} + D_{2, n} + D_{3, n} + D_{4, n} + D_{5, n} = o_{P}(n^{-1/2})$, then
\[
n^{1/2}\left[\hat{R}(\lambda) - R(\lambda)\right] \xrightarrow{d} N(0, \sigma_R^2(\lambda)),
\]
where $\sigma_R^2(\lambda) = \operatorname{Var}\left[
\phi_{R, \lambda}(Y, A, X;\, \eta)\right]$.
\end{proposition}

The budget-constrained case introduces two additional bias terms, $D_{4,n}$ and $D_{5,n}$, which reflect the estimation errors in both the CATEs ($\|\tau_Y - \hat{\tau}_Y\|_\infty$, $\|\tau_S - \hat{\tau}_S\|_\infty$) and the quantile estimation ($|y_{1-\lambda} - \hat{y}_{1-\lambda}|$, $|s_{1-\lambda} - \hat{s}_{1-\lambda}|$). These terms become negligible when the margin parameters $\beta_1$ and $\beta_2$ are sufficiently large.

\subsection{$\lambda$-Surrogate Gain}

We now develop the estimator for surrogate gain using the same framework. The corresponding estimating function is:
\begin{equation}
\begin{split}
\phi_{G, \lambda}(Y, A, X;\, \eta) &= \pi_{S, \lambda}(X)
\left\{\frac{A}{e(X)} - \frac{1-A}{1-e(X)} \right\}
\left[Y - \mu_A(X)\right] + \tau_Y(X)\,\pi_{S, \lambda}(X),
\end{split}\notag
\end{equation}
where $\lambda \in (0, 1]$, $\eta = \left( \mu_0, \mu_1, \mu_{S,0}, \mu_{S, 1}, e, \pi_{S, \lambda}, s_{1-\lambda} \right)$. Our estimator is:
\[\
\hat{G}(\lambda) = E_{I_{1,1}}\left[ \phi_{G, \lambda}(Y, A, X;\, \hat{\eta}) \right].
\]
Since $\hat{G}(\lambda)$ has a functional form similar to $\hat{R}(\lambda)$ but involves fewer nuisance parameters, its bias analysis proceeds analogously to that of Lemma \ref{lemma2} and Theorem \ref{thm: thm2}. Detailed proofs are provided in Supplementary Material Section \ref{appendix: C}.

\begin{theorem}
\label{thm: thm3}
Assume that $\|\tau_Y(X)\|_{\infty} < \infty$ and $E(Y^2) < \infty$.
Moreover, assume that $\|\hat{\mu}_0 - \mu_0\|_2 + \|\hat{\mu}_1 - \mu_1\|_2 + \|\hat{\mu}_{S, 0} - \mu_{S, 0}\|_2 + \|\hat{\mu}_{S, 1} - \mu_{S, 1}\|_2 + 
\|\hat{e} - e\|_2 
+ |s_{1-\lambda} - \hat{s}_{1-\lambda}| 
= o_{P}(1)$ Then, for $\lambda \in (0, 1]$, under $\text{Assumption } \ref{ass:margin-3}(b) \text{ and }  \ref{ass:margin-3}(d)$,
\[
\hat{G}(\lambda) - G(\lambda) = O_{P}(n^{-1/2} + D_{1, n} + D_{3, n} + D_{5, n})
\]
where $D_{1,n}, D_{3, n}\text{, and } D_{5, n}$ are defined as in Theorem \ref{thm: thm1} and Theorem \ref{thm: thm2}.
\end{theorem}

\begin{proposition}
If, in addition, $D_{1, n} + D_{3, n} + D_{5, n} = o_{P}(n^{-1/2})$, then
\[
n^{1/2}\left[\hat{G}(\lambda) - G(\lambda)\right] \xrightarrow{d} N(0, \sigma_G^2(\lambda)),
\]
where 
$\sigma_G^2(\lambda) = \operatorname{Var}\left[\phi_{G, \lambda}(Y, A, X;\, \eta)\right]$.
\end{proposition}

\subsection{$\lambda$-Surrogate Efficiency}

Finally, we consider the $\lambda$-surrogate efficiency, which compares the performance of the $\lambda$-surrogate-optimal ITR $\pi_{S, \lambda}$ over the randomized treatment rule $\pi_\lambda$. The corresponding estimating function is:
\[
\phi_{V, \lambda}(Y, A, X;\, \eta) = \left(\pi_{S, \lambda}(X) - \lambda\right)
\left\{\frac{A}{e(X)} - \frac{1-A}{1-e(X)} \right\}
\left[Y - \mu_A(X)\right] + \tau_Y(X)
\left(\pi_{S, \lambda}(X) - \lambda\right),
\]
where $\lambda \in (0, 1]$, $\eta = \left( \mu_0, \mu_1, \mu_{S,0}, \mu_{S, 1}, e, \pi_{S, \lambda}, s_{1-\lambda} \right)$. Our estimator is \[\hat{V}(\lambda) = E_{I_{1,1}}\left[ \phi_{V, \lambda}(Y, A, X;\, \hat{\eta}) \right].
\]

The bias analysis for the $\lambda$-surrogate efficiency follows the previous approach, with details provided in the Supplementary Material Section \ref{appendix: C}.
Its convergence properties mirror those of the surrogate gain estimator, as shown in the following theorem.

\begin{theorem}
\label{thm: thm4}
Assume that $\|\tau_Y(X)\|_{\infty} < \infty$ and $E(Y^2) < \infty$.
Moreover, assume that $\|\hat{\mu}_0 - \mu_0\|_2 + \|\hat{\mu}_1 - \mu_1\|_2 + \|\hat{\mu}_{S, 0} - \mu_{S, 0}\|_2 + \|\hat{\mu}_{S, 1} - \mu_{S, 1}\|_2 + 
\|\hat{e} - e\|_2 + |s_{1-\lambda} - \hat{s}_{1-\lambda}|  = o_{P}(1)$. Then, for $\lambda \in (0, 1]$, under $\text{Assumption } \ref{ass:margin-3}(b) \text{ and }  \ref{ass:margin-3}(d)$, 
\[
\hat{V}(\lambda) - V(\lambda) = O_{P}(n^{-1/2} + D_{1, n} + D_{3, n} + D_{5, n}),
\]
where $D_{1,n}, D_{3, n}\text{, and }D_{5, n}$ are defined as in Theorem \ref{thm: thm1} and Theorem \ref{thm: thm2}.
\end{theorem}

Like the surrogate gain estimator, surrogate efficiency estimation requires only three bias terms, 
since the surrogate efficiency estimator depends only on the surrogate-optimal ITR, not the outcome-optimal ITR.
Under regularity conditions, we obtain asymptotic normality for the surrogate efficiency estimator.

\begin{proposition}
If $D_{1, n} + D_{3, n} + D_{5, n} = o_{P}(n^{-1/2})$, then
\[
n^{1/2}\left[\hat{V}(\lambda) - V(\lambda)\right] \xrightarrow{d} N(0, \sigma_V^2(\lambda)),
\]
where $\sigma_V^2(\lambda) = \operatorname{Var}\left[
\phi_{V, \lambda}(Y, A, X;\, \eta)\right]$.
\end{proposition}

\section{Experiments}
\label{sec:experiments}

In this section, we assess the finite-sample performance of the proposed estimators $\hat{R}$, $\hat{R}(\lambda)$, $\hat{V}(\lambda)$, and $\hat{G}(\lambda)$ across different quantile thresholds $\lambda$ through both simulation studies and a real data analysis. The  Python codes are provided in the supplementary file.

\subsection{Simulation Experiments}

\begin{table}
    \centering
    \small
    \setlength{\tabcolsep}{2.7pt}  
    \begin{tabular}{lc|ccc|ccc|ccc|ccc}
    \hline
    & & \multicolumn{3}{c|}{$m = 500$} & \multicolumn{3}{c|}{$m = 1000$} & \multicolumn{3}{c|}{$m = 2000$} & \multicolumn{3}{c}{$m = 3000$} \\
    & $\lambda$ & Bias & SD & CP95 & Bias & SD & CP95 & Bias & SD & CP95 & Bias & SD & CP95 \\
    \hline
    $\hat{R}$
    & / & -0.0005 & 0.0551 & 0.9392 & 0.0003 & 0.0373 & 0.9472 & 0.0014 & 0.0264 & 0.9463 & 0.0019 & 0.0214 & 0.9486 \\
    \hline
    \multirow{4}{*}{$\hat{R}(\lambda)$} 
    & 0.1 & -0.0023 & 0.0387 & 0.9417 & -0.0023 & 0.0267 & 0.9460 & -0.0015 & 0.0192 & 0.9448 & -0.0015 & 0.0156 & 0.9474 \\
    & 0.2 & -0.0024 & 0.0494 & 0.9384 & -0.0014 & 0.0348 & 0.9485 & 0.0001 & 0.0250 & 0.9479 & 0.0004 & 0.0204 & 0.9510 \\
    & 0.3 & -0.0026 & 0.0540 & 0.9393 & -0.0015 & 0.0372 & 0.9454 & -0.0002 & 0.0264 & 0.9487 & 0.0002 & 0.0214 & 0.9494 \\
    & 0.4 & -0.0022 & 0.0549 & 0.9389 & -0.0014 & 0.0373 & 0.9466 & -0.0002 & 0.0264 & 0.9486 & 0.0002 & 0.0214 & 0.9495 \\
    \hline
    \multirow{4}{*}{$\hat{V}(\lambda)$} 
    & 0.1 & 0.0014 & 0.0293 & 0.9387 & 0.0013 & 0.0202 & 0.9442 & 0.0011 & 0.0143 & 0.9464 & 0.0014 & 0.0117 & 0.9472 \\
    & 0.2 & 0.0012 & 0.0372 & 0.9383 & 0.0008 & 0.0262 & 0.9437 & 0.0007 & 0.0186 & 0.9504 & 0.0014 & 0.0153 & 0.9482 \\
    & 0.3 & 0.0010 & 0.0416 & 0.9389 & 0.0001 & 0.0285 & 0.9464 & 0.0001 & 0.0199 & 0.9484 & 0.0007 & 0.0163 & 0.9477 \\
    & 0.4 & 0.0004 & 0.0456 & 0.9385 & -0.0006 & 0.0307 & 0.9463 & -0.0010 & 0.0214 & 0.9467 & -0.0003 & 0.0176 & 0.9464 \\
    \hline  
    \multirow{4}{*}{$\hat{G}(\lambda)$} 
    & 0.1 & 0.0006 & 0.0308 & 0.9367 & 0.0008 & 0.0213 & 0.9444 & 0.0008 & 0.0151 & 0.9469 & 0.0012 & 0.0124 & 0.9485 \\
    & 0.2 & -0.0003 & 0.0404 & 0.9374 & -0.0002 & 0.0290 & 0.9435 & 0.0003 & 0.0207 & 0.9481 & 0.0008 & 0.0170 & 0.9490 \\
    & 0.3 & -0.0014 & 0.0447 & 0.9411 & -0.0014 & 0.0314 & 0.9450 & -0.0007 & 0.0221 & 0.9495 & -0.0001 & 0.0180 & 0.9486 \\
    & 0.4 & -0.0028 & 0.0456 & 0.9419 & -0.0027 & 0.0316 & 0.9454 & -0.0019 & 0.0221 & 0.9503 & -0.0013 & 0.0180 & 0.9488 \\
    \hline
    \end{tabular}
    \caption{Simulation results for different estimators across budget values and sample sizes.}
    \label{tab:simulation_results}
    \end{table}

Throughout the simulation, covariates $X = (X_1, X_2)^T$ are generated from a bivariate normal distribution $N(0, I_2)$, where $I_2$ is the $2 \times 2$ identity matrix. The treatment indicator $A$ follows a Bernoulli distribution with success probability $P(A = 1\mid X) = \text{expit}(0.1X_1 + 0.1X_2)$, where $\text{expit}(x) = \exp(x)/\{1 + \exp(x)\}$. Sample sizes $m \in \{500, 1000, 2000, 3000\}$ are considered.

We consider binary outcomes, where both the primary outcome and the surrogate endpoint take values in $\{0, 1\}$. For \( a \in \{0,1\} \), the potential outcomes $Y(a)$ and $S(a)$ are generated independently from Bernoulli distributions with conditional probabilities:
\begin{align*}
P(Y(1)=1\mid X)&= \text{expit}(0.1X_1 + 0.1X_2 + 0.2), \\
P(Y(0)=1\mid X)&= \text{expit}(0.2X_1^2 + 0.2X_2 + 0.1), \\
P(S(1)=1\mid X)&= \text{expit}(0.3X_1 + 0.2X_2 + 0.2), \\
P(S(0)=1\mid X)&= \text{expit}(0.2X_1^2 + 0.1X_2 + 0.1). 
\end{align*}
The observed outcomes $Y$ and $S$ are generated accordingly: $Y = A\,Y(1) + \left(1-A\right)Y(0)$ and $S = A\,S(1) + \left(1-A\right)S(0)$.

Two datasets are randomly generated: $\mathcal{D}_1 = \{(A_i, X_i, Y_i)\}_{i=1}^n$ and $\mathcal{D}_2 = \{(A_j, X_j, S_j)\}_{j=1}^n$. We then evaluate ITRs across quantile thresholds $\lambda \in \{0.1, 0.2, 0.3, 0.4\}$. Following Definition \ref{def:estimator}, we split $\mathcal{D}_1$ into $\mathcal{D}_{1,1}$ and $\mathcal{D}_{1, 2}$ and 
use the combined auxiliary sample $\mathcal{D}_{1,2} \cup \mathcal{D}_2$ to estimate the nuisance parameters $\hat{\eta}$, while reserving $\mathcal{D}_{1,1}$ for constructing the final estimators.

For implementation, we employ logistic regression to estimate the propensity score $e(X)$, and XGBoost (Extreme Gradient Boosting) \citep{chen2016xgboost} to estimate the outcome regression functions $\mu_a(X)$ and $\mu_{S, a}(X)$ (for $a = 0, 1$), along with the CATEs $\tau_Y(X)$ and $\tau_S(X)$.


We replicate each simulation 10,000 times and evaluate performance using bias, standard deviation (SD), and coverage proportion of the 95\% confidence intervals (CP95). Specifically, bias and SD are the Monte Carlo bias and standard deviation of the point estimates of $R$, $R(\lambda)$, $V(\lambda)$, and $G(\lambda)$ across replications. CP95 is the empirical coverage proportion of the 95\% confidence intervals constructed using bootstrap with 5,000 resamples.

Table 1 shows that bias remains small across all sample sizes (500 to 3,000) and quantile thresholds, consistent with the asymptotic consistency of our estimators.
CP95 is close to the nominal 0.95 level across all sample sizes, which is empirically consistent with the theoretical asymptotic normality results in Section \ref{sec:5}.

\subsection{Real Data Analysis: Criteo Dataset}

\begin{table}
    \centering
    \small 
    \setlength{\tabcolsep}{3.5pt} 
    \begin{tabular}{c|c|ccc|ccc}
    \hline
     & & \multicolumn{3}{c|}{Visit} & \multicolumn{3}{c}{Exposure} \\
    Method & \(\lambda\) & $\hat{R}(\lambda)$ & SD & 95\% CI & $\hat{R}(\lambda)$ & SD & 95\% CI \\
    \hline
    & 5\% & 0.000195 & 0.000035 & [0.000127, 0.000263] & 0.000102 & 0.000033 & [0.000038, 0.000166] \\
    & 10\% & 0.000116 & 0.000032 & [0.000054, 0.000178] & 0.000079 & 0.000029 & [0.000022, 0.000137] \\
    RF & 15\% & 0.000105 & 0.000030 & [0.000046, 0.000164] & 0.000060 & 0.000028 & [0.000006, 0.000115] \\
    & 20\% & 0.000094 & 0.000029 & [0.000037, 0.000151] & 0.000054 & 0.000028 & [0.000000, 0.000108] \\
    \cline{2-8}
    & 5\% & 0.000160 & 0.000033 & [0.000095, 0.000225] & 0.000128 & 0.000033 & [0.000063, 0.000192] \\
    & 10\% & 0.000105 & 0.000026 & [0.000054, 0.000156] & 0.000129 & 0.000028 & [0.000074, 0.000183] \\
    LightGBM & 15\% & 0.000063 & 0.000022 & [0.000020, 0.000105] & 0.000087 & 0.000024 & [0.000041, 0.000134] \\
    & 20\% & 0.000056 & 0.000018 & [0.000021, 0.000090] & 0.000070 & 0.000020 & [0.000030, 0.000110] \\
    \cline{2-8}
    & 5\% & 0.000139 & 0.000035 & [0.000071, 0.000206] & 0.000133 & 0.000036 & [0.000064, 0.000203] \\
    & 10\% & 0.000108 & 0.000028 & [0.000053, 0.000163] & 0.000119 & 0.000031 & [0.000059, 0.000179] \\
    XGBoost & 15\% & 0.000101 & 0.000023 & [0.000056, 0.000145] & 0.000109 & 0.000026 & [0.000059, 0.000159] \\
    & 20\% & 0.000053 & 0.000017 & [0.000020, 0.000087] & 0.000071 & 0.000022 & [0.000028, 0.000114] \\
    \cline{2-8}
    & 5\% & 0.000148 & 0.000038 & [0.000074, 0.000222] & 0.000128 & 0.000035 & [0.000059, 0.000197] \\
    & 10\% & 0.000083 & 0.000032 & [0.000021, 0.000145] & 0.000090 & 0.000031 & [0.000030, 0.000150] \\
    Boosting & 15\% & 0.000086 & 0.000028 & [0.000030, 0.000141] & 0.000049 & 0.000027 & [-0.000003, 0.000102] \\
    & 20\% & 0.000094 & 0.000025 & [0.000044, 0.000144] & 0.000063 & 0.000023 & [0.000017, 0.000108] \\
    \hline
    \end{tabular}
    \caption{Comparison of different machine learning methods for estimating $R(\lambda)$ with visit and exposure as surrogate endpoints across various budget values.}
    \label{tab:rn_comparison}
    \end{table}
    
    \begin{table}
    \centering
    \small  
    \setlength{\tabcolsep}{3.5pt} 
    \begin{tabular}{c|c|ccc|ccc}
    \hline
     & & \multicolumn{3}{c|}{Visit} & \multicolumn{3}{c}{Exposure} \\
    Method & \(\lambda\) & $\hat{G}(\lambda)$ & SD & 95\% CI & $\hat{G}(\lambda)$ & SD & 95\% CI \\
    \hline
    & 5\% & 0.000421 & 0.000039 & [0.000344, 0.000497] & 0.000506 & 0.000042 & [0.000424, 0.000588] \\
    & 10\% & 0.000547 & 0.000046 & [0.000458, 0.000637] & 0.000592 & 0.000048 & [0.000498, 0.000687] \\
    RF & 15\% & 0.000607 & 0.000048 & [0.000512, 0.000702] & 0.000651 & 0.000052 & [0.000550, 0.000752] \\
    & 20\% & 0.000650 & 0.000050 & [0.000553, 0.000748] & 0.000679 & 0.000054 & [0.000574, 0.000785] \\
    \cline{2-8}
    & 5\% & 0.000497 & 0.000042 & [0.000415, 0.000579] & 0.000531 & 0.000041 & [0.000450, 0.000612] \\
    & 10\% & 0.000632 & 0.000050 & [0.000535, 0.000729] & 0.000608 & 0.000048 & [0.000513, 0.000702] \\
    LightGBM & 15\% & 0.000700 & 0.000053 & [0.000595, 0.000804] & 0.000675 & 0.000052 & [0.000574, 0.000776] \\
    & 20\% & 0.000730 & 0.000055 & [0.000622, 0.000839] & 0.000716 & 0.000054 & [0.000610, 0.000822] \\
    \cline{2-8}
    & 5\% & 0.000301 & 0.000043 & [0.000216, 0.000386] & 0.000306 & 0.000041 & [0.000225, 0.000387] \\
    & 10\% & 0.000374 & 0.000050 & [0.000277, 0.000472] & 0.000364 & 0.000048 & [0.000271, 0.000457] \\
    XGBoost & 15\% & 0.000408 & 0.000053 & [0.000303, 0.000512] & 0.000399 & 0.000051 & [0.000299, 0.000500] \\
    & 20\% & 0.000470 & 0.000056 & [0.000360, 0.000579] & 0.000452 & 0.000054 & [0.000347, 0.000558] \\
    \cline{2-8}
    & 5\% & 0.000456 & 0.000038 & [0.000382, 0.000530] & 0.000417 & 0.000076 & [0.000269, 0.000565] \\
    & 10\% & 0.000530 & 0.000047 & [0.000438, 0.000621] & 0.000461 & 0.000089 & [0.000286, 0.000635] \\
    Boosting & 15\% & 0.000523 & 0.000051 & [0.000423, 0.000623] & 0.000482 & 0.000096 & [0.000294, 0.000670] \\
    & 20\% & 0.000519 & 0.000053 & [0.000415, 0.000623] & 0.000479 & 0.000101 & [0.000281, 0.000677] \\
    \hline
    \end{tabular}
    \caption{Comparison of different machine learning methods for estimating $G(\lambda)$ with visit and exposure as surrogate endpoints across various budget values.}
    \label{tab:gn_comparison}
    \end{table}
    
    \begin{table}
        \centering
        \small  
        \setlength{\tabcolsep}{3.5pt} 
        \begin{tabular}{c|c|ccc|ccc}
        \hline
         & & \multicolumn{3}{c|}{Visit} & \multicolumn{3}{c}{Exposure} \\
        Method & \(\lambda\) & $\hat{V}(\lambda)$ & SD & 95\% CI & $\hat{V}(\lambda)$ & SD & 95\% CI \\
        \hline
        & 5\% & 0.000377 & 0.000037 & [0.000304, 0.000450] & 0.000469 & 0.000040 & [0.000391, 0.000547] \\
        & 10\% & 0.000473 & 0.000041 & [0.000391, 0.000554] & 0.000510 & 0.000044 & [0.000424, 0.000595] \\
        RF & 15\% & 0.000491 & 0.000042 & [0.000409, 0.000572] & 0.000535 & 0.000044 & [0.000449, 0.000622] \\
        & 20\% & 0.000481 & 0.000040 & [0.000402, 0.000560] & 0.000521 & 0.000043 & [0.000436, 0.000606] \\
        \cline{2-8}
        & 5\% & 0.000452 & 0.000040 & [0.000374, 0.000530] & 0.000486 & 0.000039 & [0.000408, 0.000563] \\
        & 10\% & 0.000541 & 0.000045 & [0.000453, 0.000629] & 0.000517 & 0.000044 & [0.000432, 0.000602] \\
        LightGBM & 15\% & 0.000564 & 0.000045 & [0.000475, 0.000653] & 0.000539 & 0.000044 & [0.000452, 0.000626] \\
        & 20\% & 0.000549 & 0.000044 & [0.000462, 0.000636] & 0.000535 & 0.000044 & [0.000450, 0.000620] \\
        \cline{2-8}
        & 5\% & 0.000270 & 0.000041 & [0.000189, 0.000351] & 0.000275 & 0.000039 & [0.000198, 0.000352] \\
        & 10\% & 0.000312 & 0.000045 & [0.000225, 0.000400] & 0.000302 & 0.000043 & [0.000218, 0.000386] \\
        XGBoost & 15\% & 0.000315 & 0.000045 & [0.000226, 0.000403] & 0.000306 & 0.000044 & [0.000220, 0.000392] \\
        & 20\% & 0.000346 & 0.000045 & [0.000258, 0.000433] & 0.000328 & 0.000043 & [0.000243, 0.000413] \\
        \cline{2-8}
        & 5\% & 0.000419 & 0.000036 & [0.000348, 0.000490] & 0.000439 & 0.000039 & [0.000363, 0.000515] \\
        & 10\% & 0.000456 & 0.000042 & [0.000373, 0.000539] & 0.000449 & 0.000043 & [0.000365, 0.000533] \\
        Boosting & 15\% & 0.000412 & 0.000043 & [0.000327, 0.000497] & 0.000448 & 0.000044 & [0.000362, 0.000534] \\
        & 20\% & 0.000371 & 0.000043 & [0.000288, 0.000455] & 0.000403 & 0.000043 & [0.000318, 0.000488] \\
        \hline
        \end{tabular}
        \caption{Comparison of different machine learning methods for estimating $V(\lambda)$ with visit and exposure as surrogate endpoints across various budget values.}
        \label{tab:vn_comparison}
        \end{table}

To further evaluate our approach in real-world applications, we analyze the public Criteo Uplift Prediction dataset on digital advertising \citep{Diemert2018}. This dataset aggregates multiple incrementality experiments in which a subset of users was randomly withheld from receiving advertisements. It contains 25,309,483 user-level observations, each with a binary treatment indicator, 12 covariates, two surrogate endpoints (visit and exposure) and a conversion indicator. 
The data exhibit a visit rate of 4.13\%, a conversion rate of 0.23\%, and a treatment ratio of 84.6\%, reflecting the high sparsity typical in digital advertising conversion. 
We apply our estimators to evaluate the utility of using Visit and Exposure as surrogates in ITR design.

In our experimental design, the primary outcome is conversion (whether a user made a purchase), while the surrogates include visit (whether a user visited the website) and exposure (whether the user was effectively exposed to the treatment). The treatment variable $A$ indicates whether a user was assigned to receive the advertisement. Additionally, all 12 available covariate features are incorporated into the analysis.
We then apply the same sample-splitting strategy as in the simulation study, partitioning the dataset into auxiliary and main samples.

We apply four machine learning methods to estimate CATEs: Random Forest (RF)\citep{breiman2001random}, LightGBM \citep{ke2017lightgbm}, XGBoost \citep{pedregosa2011scikit}, and Gradient Boosting \citep{natekin2013gradient}. For treatment assignment probability, we employ a logistic regression model to estimate the propensity scores. 
Implementations use \texttt{scikit-learn}'s \texttt{LogisticRegression} for the propensity model, \texttt{RandomForestRegressor} for random forests, \texttt{GradientBoostingRegressor} for gradient boosting, \texttt{lightgbm} for LightGBM, and \texttt{xgboost} for XGBoost.

We evaluate ITR performance under budget constraints $\lambda \in \{5\%, 10\%, 15\%, 20\%\}$, where $\lambda$ represents the maximum proportion of the population that can receive treatment. For each configuration, we estimate the three proposed measures with corresponding SD and $95\%$ confidence intervals: (i) the $\lambda$-surrogate regret $R(\lambda)$, quantifiying the performance gap between the $\lambda$-surrogate-optimal ITR and the $\lambda$-outcome-optimal ITR (smaller is better); (ii) the $\lambda$-surrogate gain $G(\lambda)$, measuring the net benefit of the $\lambda$-surrogate-optimal ITR relative to a no-treatment baseline (larger is better), and (iii) the $\lambda$-surrogate efficiency $V(\lambda)$, comparing the expected outcome of the $\lambda$-surrogate-optimal ITR with random treatment allocation (larger is better).

Tables 1-3 present results across all machine learning methods, surrogate endpoints, and budget constraints. 
Surrogate regret remains small in all settings (ranging from 0.000049 to 0.000195), with surrogate-optimal ITRs closely matching the performance of outcome-optimal ITRs. 
Furthermore, the consistently positive surrogate efficiency indicates that Visit and Exposure offer substantial information gain for decision-making in advertising.
Finally, performance generally improves with increased budgets, as regret decreases while gain and efficiency rise, suggesting that larger treatment budgets enable the $\lambda$-surrogate-optimal ITR to more closely approximate the $\lambda$-outcome-optimal ITR.

\section{Discussions}
In this work, we develop a framework for evaluating surrogate endpoints in the ITR setting. This framework introduces three complementary evaluation measures: surrogate regret, surrogate gain, and surrogate efficiency, along with their corresponding estimators and asymptotic properties. These measures 
provides guidance on when surrogates can aid ITR decision-making that targets the primary outcome.

Several promising directions for future research remain. First, methods that combine multiple surrogates into an optimal composite surrogate would be 
valuable. Recent work by \citet{athey2025surrogate} has explored surrogate index construction, extending such approaches to our framework is an interesting direction. 
Second, extending our framework to handle missing data scenarios, where either surrogate or outcome measurements are partially unavailable, would broaden its applicability. \citet{kallus2025role} have developed methods for estimating ITRs with missing outcomes, and building upon such techniques is a natural next step.

Overall, the theory developed in this work contributes to understanding surrogate endpoints in ITR estimation and the trade-offs between surrogate and primary objectives. Rigorous surrogate evaluation can improve decision-making in applications where parimary outcomes are costly or delayed, and our framework provides practical tools for practitioners.

\bibliographystyle{plainnat}
\bibliography{reference}



%

 \newpage 
 \appendix
 
\begin{center}
\bf \Large 
Supplementary Material  for ``Evaluating  \\ Surrogates in Individualized Treatment Rules"
\end{center}

\setcounter{equation}{0}
\setcounter{section}{0}
\setcounter{figure}{0}
\setcounter{example}{0}
\setcounter{proposition}{0}
\setcounter{corollary}{0}
\setcounter{theorem}{0}
\setcounter{table}{0}
\setcounter{condition}{0}
\setcounter{lemma}{0}
\setcounter{remark}{0}

\renewcommand {\theproposition} {S\arabic{proposition}}
\renewcommand {\theexample} {S\arabic{example}}
\renewcommand {\thefigure} {S\arabic{figure}}
\renewcommand {\thetable} {S\arabic{table}}
\renewcommand {\theequation} {S\arabic{equation}}
\renewcommand {\thelemma} {S\arabic{lemma}}
\renewcommand {\thesection} {S\arabic{section}}
\renewcommand {\thetheorem} {S\arabic{theorem}}
\renewcommand {\thecorollary} {S\arabic{corollary}}
\renewcommand {\thecondition} {S\arabic{condition}}
\renewcommand {\thepage} {S\arabic{page}}
\renewcommand {\theremark} {S\arabic{remark}}

\setcounter{page}{1}

  \setcounter{equation}{0}
\renewcommand {\theequation} {S\arabic{equation}}
  \setcounter{lemma}{0}
\renewcommand {\thelemma} {S\arabic{lemma}}
   \setcounter{definition}{0}
\renewcommand {\thedefinition} {S\arabic{definition}}
   \setcounter{example}{0}
\renewcommand {\theexample} {S\arabic{example}}
   \setcounter{proposition}{0}
\renewcommand {\theproposition} {S\arabic{proposition}}
   \setcounter{corollary}{0}
\renewcommand {\thecorollary} {S\arabic{corollary}}


\section{The optimal transformation framework in ITR}
\label{appendix 1}

The goal of the optimal transformation framework by \citet{wang2020model} is to find an optimal function of $S$, $g(\cdot)$, such that $g(S)$ can be used to approximate the primary outcome and subsequently to quantify the treatment effect on $Y$. Mathematically,
it aims to identify $g(\cdot)$ that minimizes the following mean squared error loss function:
$$L(g)=E\left[\left\{ \left(Y(1)-Y(0)\right)-\left[g(S(1))-g(S(0))\right]\right\}^2\right].$$

Since the potential outcomes $Y(1), Y(0), S(1)$, and $S(0)$ are never jointly observed for the same individual, the authors proposed minimizing an alternative observable function:

\begin{equation}
\label{transform}
\min\quad  E\left\{\left[Y-g(S)\right]^2 \right\} \quad \text{s.t.} \quad E\left\{\left[Y-g(S)\right] \mid A=0\right\}=0.
\end{equation}

Let's consider the following numerical example:

\begin{example}
Suppose that $X$ takes values in $\{-1,0,1\}$, each with probability $1/3$.  The joint distribution of the potential outcomes $(S,Y)$ is given by:

$$
(S(0),S(1),Y(0),Y(1)) = 
\begin{cases} 
(2,3,4,3), & X=-1, \\  
(3,2,3,4), & X=0,    \\    
(2,2,0,0), &   X=1.
\end{cases}
$$
\end{example}

Applying their method to this example, we solve the optimization problem in \eqref{transform} and obtain the optimal transformation $g(s)=s$. We denote the optimal ITR derived based on this transformed surrogate $g(S)$ as the transformed-surrogate-optimal ITR. Under this rule, individuals with $X=-1$ are assigned to the treatment group. The resulting expected outcome under this ITR is $2$.
In contrast, the outcome-optimal ITR treats individuals with $X=0$, achieving a higher value of $8/3$. 
For any budget level $\lambda$, consider the randomized treatment rule $\pi_\lambda$ that assigns treatment according to
$\text{Bernoulli}(\lambda)$. The expected outcome under $\pi_\lambda$ is 
$7/3$. Therefore, the transformed-surrogate-optimal ITR performs worse than both the outcome-optimal ITR and $\pi_\lambda$ for all $\lambda$.

\section{Estimation Algorithms}
\label{appendix3}
In this section, we present three supplementary estimation algorithms that complement the methods discussed in the main text. Algorithm S1 presents the procedure for the split data case corresponding to discussions in Section 5.1, while Algorithms S2 and S3 outline the approach for the single dataset case in Section 2.1.

These algorithms are designed to estimate a general parameter $\theta$, where $\theta$ represents a specific measure of interest such as the surrogate regret $R$, the $\lambda$-surrogate regret $R(\lambda)$, the $\lambda$-surrogate gain $G(\lambda)$, or the $\lambda$-surrogate efficiency $V(\lambda)$, depending on the specific form of the estimating function $\phi$.

\subsection{Cross-fitting for Split Dataset Case}
\label{appendix: B1}

We now present the algorithm for the two-sample setting discussed in Section 5.1, where we have access to two separate datasets: $\mathcal{D}_1$ containing outcomes $(A, X, Y)$ but no surrogates, and $\mathcal{D}_2$ containing surrogates $(A, X, S)$ but no outcomes. Algorithm S1 below describes the construction of the estimator using cross-fitting on $\mathcal{D}_1$.

\noindent\rule[0.25\baselineskip]{\textwidth}{1pt}
{\bf Algorithm S1.}
\label{alg:cross_fitting}

\textbf{Data Setup:} Partition the outcome dataset index set $I_1$ into $K$ disjoint folds $I_{(1)}, \ldots, I_{(K)}$ of equal size. Let $I_{(k)}^C = I_1 \setminus I_{(k)}$ denote the complement of $I_{(k)}$ for $k = 1, \ldots, K$. Let $I_2$ denote the index set of the surrogate dataset $\mathcal{D}_2$.

\textbf{Step 1: Nuisance parameter training with cross-fitting.}

Using $\mathcal{D}_2$ (indices $I_2$), estimate propensity score $\hat{e}$ and surrogate regression functions $\hat{\mu}_{S,0}, \hat{\mu}_{S,1}$.

\quad \textbf{for} $k = 1$ to $K$ \textbf{do}

    \quad (1) Using fold $I_{(k)}^C$, estimate outcome regression functions $\hat{\mu}_0, \hat{\mu}_1$.
    
    \quad (2) \textit{CATEs:} Compute $\hat{\tau}_Y(x) = \hat{\mu}_1(x) - \hat{\mu}_0(x)$ and $\hat{\tau}_S(x) = \hat{\mu}_{S,1}(x) - \hat{\mu}_{S,0}(x)$.
    
    \quad (3) \textit{Thresholds and ITRs:}

    \quad \textbf{Case 1: Unconstrained setting ($\lambda = 1$).}
    \quad Thresholds are not required; we have 
    $$\hat{\pi}_{Y}(x) = \mathbf{1}\{\hat{\tau}_Y(x) > 0\}, \hat{\pi}_{S}(x) = \mathbf{1}\{\hat{\tau}_S(x) > 0\}.$$
    
    \quad \textbf{Case 2: Budget-constrained setting ($\lambda < 1$).}
    \quad Using $I_2$ and $I_{(k)}^C$ respectively, we estimate the quantiles $\hat{y}_{1-\lambda}$ and $\hat{s}_{1-\lambda}$:
    \[
        \hat{y}_{1-\lambda} = \inf\left\{t \in \mathbb{R} : \frac{\sum_{j \in I_2} \mathbf{1}\{\hat{\tau}_Y(X_j) \leq t\}}{|I_2|} \geq 1-\lambda\right\},
    \]
    \[
        \hat{s}_{1-\lambda} = \inf\left\{t \in \mathbb{R} : \frac{\sum_{j \in I_{(k)}^C} \mathbf{1}\{\hat{\tau}_S(X_j) \leq t\}}{|I_{(k)}^C|} \geq 1-\lambda\right\}.
    \]
    \quad The estimated ITR is
    $$\hat{\pi}_{Y,\lambda}(x) = \mathbf{1}\{\hat{\tau}_Y(x) > \hat{y}_{1-\lambda}\} \cdot \mathbf{1}\{\hat{\tau}_Y(x) > 0\}, 
    \hat{\pi}_{S,\lambda}(X) = \mathbf{1}\{\hat{\tau}_S(X) > \hat{s}_{1-\lambda}\}\cdot\mathbf{1}\{\hat{\tau}_S(X) > 0\}.
    $$    
    
    \quad (4) Construct estimates 
    \[
    \hat{\eta}_{\lambda, -k} = 
    \begin{cases} 
    \left( \hat{\mu}_0, \hat{\mu}_1, \hat{\mu}_{S,0}, \hat{\mu}_{S,1}, \hat{e}, \hat{\pi}_{Y, \lambda}, \hat{\pi}_{S, \lambda}, \hat{y}_{1-\lambda}, \hat{s}_{1-\lambda}\right), &\lambda < 1, \\
    \left( \hat{\mu}_0, \hat{\mu}_1, \hat{\mu}_{S,0}, \hat{\mu}_{S,1}, \hat{e}, \hat{\pi}_{Y}, \hat{\pi}_{S}\right), & \lambda = 1.
    \end{cases}
    \]
    For notational simplicity, we suppress the dependence on $\lambda$ and denote the nuisance parameter estimator as $\hat{\eta}_{-k}$ hereafter.

    \quad (5) Obtain the predicted values of $\eta(X_i)$ for $i \in I_{(k)}$, denoted as $\hat{\eta}_{-k}(X_i)$.
    
\quad \textbf{end}


\textbf{Step 2: Constructing the final estimator.}
The proposed estimator of $\hat{\theta}$ is given as
\begin{equation*}
    \hat{\theta} = K^{-1}\sum_{k=1}^K
    \left(|I_{(k)}|^{-1}\sum_{i \in I_{(k)}}
    \phi(Y_i, A_i, X_i; \hat{\eta}_{-k})\right).
\end{equation*}

\noindent\rule[0.25\baselineskip]{\textwidth}{1pt}

\begin{remark}
    The cross-fitting estimator shares the same asymptotic properties as the sample-splitting estimator. We illustrate this equivalence by revisiting the error decomposition used in the proof of Theorem 1 (see Section \ref{pf:1}). 
    Let $\hat{R}^{(k)}$ denote the estimator computed on the $k$-th fold $I_{(k)}$ using nuisance parameters $\hat{\eta}_{\lambda, -k}$ estimated from the complementary folds. Let $E_{I_{(k)}}$ denote the empirical expectation over the set $I_{(k)}$. The estimation error for a single fold can be decomposed as:
    \begin{align}
        \hat{R}^{(k)} - R &= 
        E_{I_{(k)}}\left[\phi_R(Y, A, X;\, \hat{\eta}_{-k})\right] - E_P\left[\phi_R(Y, A, X;\, \eta)\right]\notag
        \\
        &= 
        \underbrace{(E_{I_{(k)}}-E_P)\left[\phi_R(Y, A, X;\, \eta)\right]}_{\text{Leading Term}}
        \nonumber\\
        &\quad + \underbrace{(E_{I_{(k)}}-E_P)\left[\phi_R(Y, A, X;\, \hat{\eta}_{-k}) - \phi_R(Y, A, X;\, \eta)\right]}_{T_1^{(k)}} \nonumber
        \\
        &\quad + \underbrace{E_P\left[\phi_R(Y, A, X;\, \hat{\eta}_{-k}) - \phi_R(Y, A, X;\, \eta)\right]}_{T_2^{(k)}}. \notag
    \end{align}
    The cross-fitting estimator is given by $\hat{R}_{CF} = K^{-1} \sum_{k=1}^K \hat{R}^{(k)}$. We analyze these terms as follows:

\begin{enumerate}
    \item Term $T_1^{(k)}$:
    Conditional on the data split, the nuisance parameters $\hat{\eta}_{-k}$ are independent of the evaluation fold $I_{(k)}$. This independence allows the application of Lemma \ref{lemma: B1} to each fold individually. Consequently, the term $T_1^{(k)}$ is asymptotically negligible ($o_P(n^{-1/2})$) for every $k$. The average of these terms in $\hat{R}_{CF}$ remains $o_P(n^{-1/2})$.
    
    \item Term $T_2^{(k)}$:
    Provided that Assumption 3 holds for the subsamples used to estimate $\hat{\eta}_{-k}$, the bound derived in Theorem 1 applies to each fold $\hat{R}^{(k)}$. Averaging these terms over $K$ folds preserves the convergence rate $O_P(D_{1,n} + D_{2,n} + D_{3,n})$.
    
    \item Asymptotic Normality (Leading Term):
    Provided the condition $D_{1,n} + D_{2,n} + D_{3,n} = o_P(n^{-1/2})$ holds and the bias terms are negligible, the asymptotic distribution is determined by the average of the leading terms. Since the folds form a partition of the full dataset $I_1$ (assuming equal fold sizes for simplicity), the average of empirical expectations over folds is equivalent to the empirical expectation over the full set $E_{I_1}$:
    \[
    \frac{1}{K} \sum_{k=1}^K (E_{I_{(k)}}-E_P)\left[\phi_R(\cdot;\, \eta)\right] = (E_{I_1}-E_P)\left[\phi_R(\cdot;\, \eta)\right].
    \]
    Therefore, the cross-fitting estimator $\hat{R}_{CF}$ satisfies:
    \[
    |I_1|^{1/2}\left(\hat{R}_{CF} - R\right) = |I_1|^{1/2} (E_{I_1}-E_P)\left[\phi_R(Y, A, X;\, \eta)\right] + o_P(1).
    \]
    Thus, $\hat{R}_{CF}$ achieves the same asymptotic variance $\sigma_R^2$ as the sample-splitting estimator stated in Proposition 1.
\end{enumerate}
\end{remark}

\subsection{Sample Splitting for Single Dataset Case}
\label{appendix: B2}
We now present the sample splitting algorithm for the single dataset case introduced in Section 2.1, where we have access to a unified dataset $\mathcal{D}$ containing both outcomes and surrogates $(A, X, S, Y)$ for all observations.

\noindent\rule[0.25\baselineskip]{\textwidth}{1pt}
{\bf Algorithm S2.}
\label{alg:sample_splitting_simple}

\textbf{Data Setup:} 
Given the single dataset $\mathcal{D} = \{(A_i, X_i, S_i, Y_i)\}_{i=1}^m$, we randomly partition the index set $\{1, \dots, m\}$ into two disjoint sets of equal size: $I_1$ and $I_2$.

\textbf{Step 1: Nuisance parameter training on $I_2$.}

    \quad (1) Randomly partition the indices $I_2$ into two disjoint halves $J_{1}$ and $J_{2}$ of equal size.
    
    \quad (2) Using the full $I_2$, estimate the propensity score $\hat{e}$.
    
    \quad (3) Using fold $J_{1}$, estimate the outcome regression functions $\{\hat{\mu}_0, \hat{\mu}_1\}$; using fold $J_{2}$, estimate the surrogate regression functions $\{\hat{\mu}_{S, 0}, \hat{\mu}_{S, 1}\}$.
    
    \quad (4) \textit{CATEs:} Compute $\hat{\tau}_Y(x) = \hat{\mu}_1(x) - \hat{\mu}_0(x)$ and $\hat{\tau}_S(x) = \hat{\mu}_{S,1}(x) - \hat{\mu}_{S,0}(x)$.
    
    \quad (5) \textit{Thresholds and ITRs:}

    \quad \textbf{Case 1: Unconstrained setting ($\lambda = 1$).}
    \quad Thresholds are not required; we have 
    $$\hat{\pi}_{Y}(x) = \mathbf{1}\{\hat{\tau}_Y(x) > 0\}, \hat{\pi}_{S}(x) = \mathbf{1}\{\hat{\tau}_S(x) > 0\}.$$
    
    \quad \textbf{Case 2: Budget-constrained setting ($\lambda < 1$).}
    \quad Utilizing $J_{k,1}$ and $J_{k,2}$ respectively, we estimate the quantiles $y_{1-\lambda}$, $s_{1-\lambda}$:
    \[
        \hat{y}_{1-\lambda} = \inf\left\{t \in \mathbb{R} : 
        \frac{\sum_{j \in J_{2}} \mathbf{1}\{\hat{\tau}_Y(X_j) \leq t\}}{|J_{2}|} \geq 1-\lambda\right\},
    \]
    \[
        \hat{s}_{1-\lambda} = \inf\left\{t \in \mathbb{R} : 
        \frac{\sum_{j \in J_{1}} \mathbf{1}\{\hat{\tau}_S(X_j) \leq t\}}{|J_{1}|} \geq 1-\lambda\right\}.
    \]
    \quad The estimated ITRs are
    $$\hat{\pi}_{Y,\lambda}(x) = \mathbf{1}\{\hat{\tau}_Y(x) > \hat{y}_{1-\lambda}\} \cdot \mathbf{1}\{\hat{\tau}_Y(x) > 0\}, 
    \hat{\pi}_{S,\lambda}(X) = \mathbf{1}\{\hat{\tau}_S(X) > \hat{s}_{1-\lambda}\}\cdot\mathbf{1}\{\hat{\tau}_S(X) > 0\}.
    $$    
    \quad (6) Construct estimates \[
    \hat{\eta} = 
    \begin{cases} 
    \left( \hat{\mu}_0, \hat{\mu}_1, \hat{\mu}_{S,0}, \hat{\mu}_{S,1}, \hat{e}, \hat{\pi}_{Y, \lambda}, \hat{\pi}_{S, \lambda}, \hat{y}_{1-\lambda}, \hat{s}_{1-\lambda}\right), &\lambda < 1, \\
    \left( \hat{\mu}_0, \hat{\mu}_1, \hat{\mu}_{S,0}, \hat{\mu}_{S,1}, \hat{e}, \hat{\pi}_{Y}, \hat{\pi}_{S}\right), & \lambda = 1.
    \end{cases}
    \]

\textbf{Step 2: Constructing the final estimator.}
The proposed estimator of \(\theta\), denoted as $\hat{\theta}$, is given by:
\begin{equation*}
    \hat{\theta} = |I_{1}|^{-1}\sum_{i \in I_{1}}
    \phi(Y_i, A_i, X_i; \hat{\eta}).
\end{equation*}

\noindent\rule[0.25\baselineskip]{\textwidth}{1pt}

\subsection{Cross-fitting for Single Dataset Case}
To improve finite-sample performance, we now present the cross-fitting algorithm. 

\noindent\rule[0.25\baselineskip]{\textwidth}{1pt}
{\bf Algorithm S3.}
\label{alg:sample_splitting_unified}

\textbf{Data Setup:} 
In the scenario where only a single dataset $\mathcal{D} = \{(A_i, X_i, S_i, Y_i)\}_{i=1}^m$ is available, we randomly partition the index set $\{1, \dots, m\}$ into $K$ disjoint folds $I_{(1)}, \ldots, I_{(K)}$ of equal size. Let $I_{(k)}^C = \{1, \dots, m\} \setminus I_{(k)}$ denote the complement of $I_{(k)}$ (the training set) for $k = 1, \ldots, K$.

\textbf{Step 1: Nuisance parameter training with cross-fitting.}

\quad \textbf{for} $k = 1$ to $K$ \textbf{do}

    \quad (1) Randomly partition the training indices $I_{(k)}^C$ into two disjoint halves $J_{k,1}$ and $J_{k,2}$ of equal size.
    
    \quad (2) Using the full training set $I_{(k)}^C$, estimate the propensity score $\hat{e}$.
    
    \quad (3) Using folds $J_{k,1}$ and $J_{k,2}$, estimate the outcome regression functions $\{\hat{\mu}_0, \hat{\mu}_1\}$ and the surrogate regression functions $\{\hat{\mu}_{S, 0}, \hat{\mu}_{S, 1}\}$, respectively.
    
    \quad (4) \textit{CATEs:} Compute $\hat{\tau}_Y(x) = \hat{\mu}_1(x) - \hat{\mu}_0(x)$ and $\hat{\tau}_S(x) = \hat{\mu}_{S,1}(x) - \hat{\mu}_{S,0}(x)$.
    
    \quad (5) \textit{Thresholds and ITRs:}

    \quad \textbf{Case 1: Unconstrained setting ($\lambda = 1$).}
    \quad The thresholds are not required, we have 
    $$\hat{\pi}_{Y}(x) = \mathbf{1}\{\hat{\tau}_Y(x) > 0\}, \hat{\pi}_{S}(x) = \mathbf{1}\{\hat{\tau}_S(x) > 0\}.$$
    
    \quad \textbf{Case 2: Budget-constrained setting ($\lambda < 1$).}
    \quad Utilizing $J_{k,1}$ and $J_{k,2}$ respectively, we estimate the quantiles $\hat{y}_{1-\lambda}$ and $\hat{s}_{1-\lambda}$:
    \[
        \hat{y}_{1-\lambda} = \inf\left\{t \in \mathbb{R} : 
        \frac{\sum_{j \in J_{k,2}} \mathbf{1}\{\hat{\tau}_Y(X_j) \leq t\}}{|J_{k,2}|} \geq 1-\lambda\right\},
    \]
    \[
        \hat{s}_{1-\lambda} = \inf\left\{t \in \mathbb{R} : 
        \frac{\sum_{j \in J_{k,1}} \mathbf{1}\{\hat{\tau}_S(X_j) \leq t\}}{|J_{k,1}|} \geq 1-\lambda\right\}.
    \]
    \quad The estimated ITR is
    $$\hat{\pi}_{Y,\lambda}(x) = \mathbf{1}\{\hat{\tau}_Y(x) > \hat{y}_{1-\lambda}\} \cdot \mathbf{1}\{\hat{\tau}_Y(x) > 0\}, 
    \hat{\pi}_{S,\lambda}(X) = \mathbf{1}\{\hat{\tau}_S(X) > \hat{s}_{1-\lambda}\}\cdot\mathbf{1}\{\hat{\tau}_S(X) > 0\}.
    $$    
    \quad (6) Construct estimates \[
    \hat{\eta}_{-k} = 
    \begin{cases} 
    \left( \hat{\mu}_0, \hat{\mu}_1, \hat{\mu}_{S,0}, \hat{\mu}_{S,1}, \hat{e}, \hat{\pi}_{Y, \lambda}, \hat{\pi}_{S, \lambda}, \hat{y}_{1-\lambda}, \hat{s}_{1-\lambda}\right), &\lambda < 1, \\
    \left( \hat{\mu}_0, \hat{\mu}_1, \hat{\mu}_{S,0}, \hat{\mu}_{S,1}, \hat{e}, \hat{\pi}_{Y}, \hat{\pi}_{S}\right), & \lambda = 1.
    \end{cases}
    \]
    
\quad \textbf{end}

\textbf{Step 2: Constructing the final estimator.}
The proposed estimator of $\hat{\theta}$ is given as
\begin{equation*}
    \hat{\theta} = K^{-1}\sum_{k=1}^K
    \left(|I_{(k)}|^{-1}\sum_{i \in I_{(k)}}
    \phi(Y_i, A_i, X_i; \hat{\eta}_{-k})\right).
\end{equation*}

\noindent\rule[0.25\baselineskip]{\textwidth}{1pt}

\section{Auxiliary Lemmas}
\label{appendix: C}
In this section, we first provide the proofs of the lemmas stated in the main text, followed by several auxiliary results used in the proofs of our main theorems. To simplify the notation in the proofs that follow, we will omit the argument $X$ from functions such as $e(X)$, $\mu_1(X)$, $\tau_Y(X)$, and $\pi_Y(X)$ when the context is clear, denoting them simply as $e$, $\mu_1$, $\pi_Y$, and $\tau_Y$.

\subsection{Proof of Lemma 1}
\begin{proof}
The difference in value between the outcome-optimal ITR and the surrogate-optimal ITR is given by:
\begin{align}
E\left[Y(\pi_Y(X))\right] - E\left[Y(\pi_S(X))\right] & = 
E \left[(\pi_Y - \pi_S) \cdot Y(1) + (1-\pi_Y - 1 + \pi_S) \cdot Y(0)\right]
\notag
\\
&= E \left\{E\left[(\pi_Y - \pi_S) \cdot \left[Y(1)-Y(0)\right] \mid X\right]\right\}
\notag
\\
& = E \left[ \tau_Y \cdot (\pi_Y - \pi_S)\right].
\label{eq:lem1.1}
\end{align}

Since $\pi_Y, \pi_S \in \{0, 1\}$, the term $(\pi_Y - \pi_S)$ is non-zero only when the decisions disagree. We can decompose this into two disjoint cases:

\begin{enumerate}
    \item Under-treatment: $\pi_Y=1$ but $\pi_S=0$ (i.e., $\tau_Y > 0$ and $\tau_S \le 0$). Here, $\pi_Y - \pi_S = 1$.
    \item Over-treatment: $\pi_Y=0$ but $\pi_S=1$ (i.e., $\tau_Y \le 0$ and $\tau_S > 0$). Here, $\pi_Y - \pi_S = -1$.
\end{enumerate}

Substituting these indicators into the expectation:
\begin{align}
    \eqref{eq:lem1.1} &= 
    E \left[ \tau_Y \cdot \mathbf{1}(\tau_Y > 0) \cdot\mathbf{1}(\tau_S \leq 0)\right]
    - E \left[ \tau_Y \cdot \mathbf{1}(\tau_Y \leq 0) \cdot\mathbf{1}(\tau_S > 0)\right]
    \notag
    \\
    &= E \left\{ |\tau_Y| \cdot \left[\mathbf{1}(\tau_Y > 0) \cdot\mathbf{1}(\tau_S \leq 0) + \mathbf{1}(\tau_Y \leq 0) \cdot\mathbf{1}(\tau_S > 0)\right]\right\}. \label{eq:lem1.2}
\end{align}
Since the term inside the expectation in \eqref{eq:lem1.2} is non-negative, the expectation vanishes if and only if the random variable itself is zero almost surely:
\begin{align*}
    \eqref{eq:lem1.2} = 0 
    &\iff |\tau_Y| \cdot \left[\mathbf{1}(\tau_Y > 0) \cdot\mathbf{1}(\tau_S \leq 0) + \mathbf{1}(\tau_Y < 0) \cdot\mathbf{1}(\tau_S > 0)\right] \stackrel{\text{a.s.}}{=} 0 \\
    &\iff \mathbf{1}(\tau_Y > 0) \cdot\mathbf{1}(\tau_S \leq 0) \stackrel{\text{a.s.}}{=} 0 \quad \text{and} \quad \mathbf{1}(\tau_Y < 0) \cdot\mathbf{1}(\tau_S > 0) \stackrel{\text{a.s.}}{=} 0
    \\
    &\iff E \left[ \mathbf{1}(\tau_Y < 0)\cdot\mathbf{1}(\tau_S > 0) \right] = E \left[ \mathbf{1}(\tau_Y > 0)\cdot\mathbf{1}(\tau_S \leq 0) \right] = 0.
\end{align*}
\end{proof}

\subsection{Proof of Lemma 2}
\label{lemma:2}
The proof of Lemma \ref{lemma:2} is a direct consequence of Lemma \ref{lemma: A1} (by setting $\lambda=1$). The algebraic derivations are identical, so we omit the details for brevity. \qed

\subsection{Additional Technical Lemmas and Proofs}

\begin{lemma}[$\lambda$-regret lemma]
\label{lemma: A1}
    For brevity of notation, we denote $\phi_{R, 1}(Y, A, X;\, \eta)$ as $\phi_R(Y, A, X;\, \eta)$, $\pi_{Y, 1}(X)$ as $\pi_Y(X)$, and $\pi_{S, 1}(X)$ as $\pi_S(X)$. Let $\hat{\eta}$ be a collection of estimators for the nuisance parameters $\eta$. Then for any $\lambda \in (0,1]$,
    \begin{align}
        &E\left[\phi_{R, \lambda}(Y, A, X;\, \hat{\eta}) - \phi_{R, \lambda}(Y, A, X;\, \eta)\mid X\right] =
        \notag
        \\
        &   (\hat{\pi}_{Y, \lambda} - \hat{\pi}_{S, \lambda}) \left[ 
        \frac{\hat{e} - e}{\hat{e}}(\hat{\mu}_1 - \mu_1) + 
        \frac{\hat{e} - e}{1 - \hat{e}}(\hat{\mu}_0 - \mu_0)\right] + 
        \tau_Y\left[ (\hat{\pi}_{Y, \lambda} - \pi_{Y, \lambda}) - 
        (\hat{\pi}_{S, \lambda} - \pi_{S, \lambda}) \right]. \notag
    \end{align}
\end{lemma}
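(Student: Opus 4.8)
The plan is to establish the identity by a direct conditional-expectation computation, treating $\hat{\eta}_\lambda$ as a fixed collection of functions of $X$ exactly as the lemma statement prescribes (this mirrors the cross-fitted construction, where the estimated nuisances are independent of the main-sample observation $(Y,A,X)$ on which $\phi_\lambda$ is evaluated). To keep the algebra transparent I would write $\Delta_\lambda = \pi_{Y,\lambda} - \pi_{S,\lambda}$ and $\hat{\Delta}_\lambda = \hat{\pi}_{Y,\lambda} - \hat{\pi}_{S,\lambda}$, so that $\phi_\lambda(Y,A,X;\eta_\lambda) = \tau_Y\Delta_\lambda + \Delta_\lambda\{A/e - (1-A)/(1-e)\}(Y-\mu_A)$, with the hatted version obtained by replacing every nuisance by its estimate and setting $\hat{\tau}_Y = \hat{\mu}_1 - \hat{\mu}_0$.

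First I would evaluate the two conditional means separately, the only nonroutine input being the inverse-propensity residual term. Using consistency ($Y = AY(1)+(1-A)Y(0)$) together with unconfoundedness ($A \indep Y(a)\mid X$) and the identifications $\mu_1(X)=E[Y(1)\mid X]$, $\mu_0(X)=E[Y(0)\mid X]$, I would show
\begin{align*}
E\!\left[\frac{A}{\hat{e}}(Y-\hat{\mu}_1)\,\Big|\,X\right] = \frac{e}{\hat{e}}(\mu_1 - \hat{\mu}_1), \qquad
E\!\left[\frac{1-A}{1-\hat{e}}(Y-\hat{\mu}_0)\,\Big|\,X\right] = \frac{1-e}{1-\hat{e}}(\mu_0 - \hat{\mu}_0).
\end{align*}
Setting $\hat{e}=e$ and $\hat{\mu}_a=\mu_a$ makes both vanish, giving $E[\phi_\lambda(\eta_\lambda)\mid X] = \tau_Y\Delta_\lambda$; retaining the hats gives $E[\phi_\lambda(\hat{\eta}_\lambda)\mid X] = \hat{\tau}_Y\hat{\Delta}_\lambda + \hat{\Delta}_\lambda\{\tfrac{e}{\hat{e}}(\mu_1-\hat{\mu}_1) - \tfrac{1-e}{1-\hat{e}}(\mu_0-\hat{\mu}_0)\}$.

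Subtracting the two, the only substantive manipulation is to split $\hat{\tau}_Y\hat{\Delta}_\lambda - \tau_Y\Delta_\lambda = (\hat{\tau}_Y-\tau_Y)\hat{\Delta}_\lambda + \tau_Y(\hat{\Delta}_\lambda - \Delta_\lambda)$, where the second piece is immediately $\tau_Y[(\hat{\pi}_{Y,\lambda}-\pi_{Y,\lambda}) - (\hat{\pi}_{S,\lambda}-\pi_{S,\lambda})]$, matching the claimed first-order term. It then remains to verify that $(\hat{\tau}_Y-\tau_Y)\hat{\Delta}_\lambda$ combines with the propensity-residual term into the doubly robust piece $\hat{\Delta}_\lambda\{\tfrac{\hat{e}-e}{\hat{e}}(\hat{\mu}_1-\mu_1) + \tfrac{\hat{e}-e}{1-\hat{e}}(\hat{\mu}_0-\mu_0)\}$. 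Writing $d_a = \hat{\mu}_a - \mu_a$ and using $\hat{\tau}_Y - \tau_Y = d_1 - d_0$, this reduces to the elementary identities $1 - e/\hat{e} = (\hat{e}-e)/\hat{e}$ and $(1-e)/(1-\hat{e}) - 1 = (\hat{e}-e)/(1-\hat{e})$, after which $\hat{\Delta}_\lambda$ factors out and the two expressions coincide.

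The argument is entirely mechanical, and the step I expect to demand the most care — though it is a modest obstacle rather than a genuine difficulty — is the final rearrangement: keeping the sign conventions straight when converting between $\mu_a-\hat{\mu}_a$ and $\hat{\mu}_a-\mu_a$ and correctly factoring $\hat{\Delta}_\lambda$ out of both contributions. I would emphasize that no margin or rate conditions enter here: the identity is exact and valid for arbitrary fixed $\hat{\eta}_\lambda$, which is exactly what makes it the engine for the rate decomposition in Theorem~\ref{thm:thm2}, once the indicator-difference terms $\hat{\Delta}_\lambda - \Delta_\lambda$ are controlled through the margin conditions of Assumption~\ref{ass:margin-3}.
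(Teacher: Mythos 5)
Your proposal is correct and follows essentially the same route as the paper's proof: both evaluate the conditional mean of the inverse-propensity residual term under the hatted nuisances, split the difference via $\hat{\tau}_Y\hat{\Delta}_\lambda - \tau_Y\Delta_\lambda = (\hat{\tau}_Y-\tau_Y)\hat{\Delta}_\lambda + \tau_Y(\hat{\Delta}_\lambda-\Delta_\lambda)$, and close with the same elementary identities $1 - e/\hat{e} = (\hat{e}-e)/\hat{e}$ and $(1-e)/(1-\hat{e}) - 1 = (\hat{e}-e)/(1-\hat{e})$. Your final remark that the identity is exact for fixed $\hat{\eta}_\lambda$, with margin conditions entering only later, is also consistent with how the paper deploys this lemma in Theorem~\ref{thm:thm2}.
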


\begin{proof}
    Observe that
    \begin{align}
        &E\left[\phi_{R, \lambda}(Y, A, X;\, \hat{\eta}) - \phi_{R, \lambda}(Y, A, X;\, \eta)\mid X\right]\notag
        \\&= E\left\{(\hat{\pi}_{Y, \lambda} - \hat{\pi}_{S, \lambda})\left[\left(\frac{A}{\hat{e}} - 
        \frac{1-A}{1-\hat{e}} \right)(Y - \hat{\mu}_A) + 
       \hat{\tau}_Y\right]\mid X\right\} - 
       \tau_Y(\pi_{Y, \lambda} - \pi_{S, \lambda})\notag
        \\&= (\hat{\pi}_{Y, \lambda} - \hat{\pi}_{S, \lambda})\left\{ 
        E_P\left[ \left(\frac{A}{\hat{e}} - \frac{1-A}{1 - \hat{e}}\right) 
        \left( Y - \hat{\mu}_A \right)  
        \mid X \right] + \hat{\mu}_1 - \hat{\mu}_0 \right\} - 
        \left(\mu_1 - \mu_0\right)\left(\pi_{Y, \lambda} - 
        \pi_{S, \lambda} \right)\notag
        \\
        &= (\hat{\pi}_{Y, \lambda} - \hat{\pi}_{S, \lambda})\left\{ 
        \frac{e}{\hat{e}}(\mu_1 - \hat{\mu}_1) - 
        \frac{1-e}{1 - \hat{e}}(\mu_0 - \hat{\mu}_0) 
         + \hat{\mu}_1 - \hat{\mu}_0 -  \left(\mu_1 - \mu_0\right) \right\} + 
         \left(\mu_1 - \mu_0\right)\left[\hat{\pi}_{Y, \lambda} - \hat{\pi}_{S, \lambda} - (\pi_{Y, \lambda} - 
         \pi_{S, \lambda}) \right]\notag
        \\
        &= (\hat{\pi}_{Y, \lambda} - \hat{\pi}_{S, \lambda}) \left[ 
    \frac{\hat{e} - e}{\hat{e}}(\hat{\mu}_1 - \mu_1) + 
    \frac{\hat{e} - e}{1 - \hat{e}}(\hat{\mu}_0 - \mu_0)\right] + 
    \tau_Y\left[ (\hat{\pi}_{Y, \lambda} - \pi_{Y, \lambda}) - 
    (\hat{\pi}_{S, \lambda} - \pi_{S, \lambda}) \right].\notag
    \end{align}
\end{proof}

\begin{lemma}[$\lambda$-gain lemma]
\label{lemma: A2}
    Let $\hat{\eta}$ be a collection of estimators for the nuisance parameters $\eta$. Then for any $\lambda \in (0,1]$,
    \begin{align}
        &E\left[\phi_{G, \lambda}(Y, A, X;\, \hat{\eta}) - \phi_{G, \lambda}(Y, A, X;\, \eta) \mid X\right] = \notag
        \\
        & \hat{\pi}_{S, \lambda} \left[ \frac{(\hat{e} - e)(\hat{\mu}_1 - \mu_1)}{\hat{e}} + 
    \frac{(\hat{e} - e)(\hat{\mu}_0 - \mu_0)}{1 - \hat{e}} \right] + 
    \tau_Y\left(\hat{\pi}_{S, \lambda} - \pi_{S, \lambda} \right).\notag
    \end{align}
\end{lemma}
\begin{proof}
    The results follow from the argument in the proof of Lemma \ref{lemma: A1}.
\end{proof}

\begin{lemma}[$\lambda$-efficiency lemma]
\label{lemma: A3}
    Let $\hat{\eta}$ be a collection of estimators for the nuisance parameters $\eta$. Then for any $\lambda \in (0,1]$,
    \begin{align}
        &E\left[\phi_{V, \lambda}(Y, A, X;\, \hat{\eta}) - \phi_{V, \lambda}(Y, A, X;\, \eta) \mid X \right] = \notag
        \\
        &  (\hat{\pi}_{S, \lambda} - \lambda)\left[ \frac{(\hat{e} - e)(\hat{\mu}_1 - \mu_1)}{\hat{e}} + 
    \frac{(\hat{e} - e)(\hat{\mu}_0 - \mu_0)}{1 - \hat{e}} \right] + 
    \tau_Y\left(\hat{\pi}_{S, \lambda} - \pi_{S, \lambda}\right).
    \notag
    \end{align}
\end{lemma}
\begin{proof}
    The results follows from the argument in the proof of Lemma \ref{lemma: A1}.
\end{proof}

\begin{lemma}[Lemma 2 in Kennedy et al. (2020)]
\label{lemma: B1}
Let \( \hat{f}(o) \) be a function estimated from training data \( \mathcal{O}_1 = \mathcal{D}_{1, 2} \cup \mathcal{D}_2 \), and let \( E_{I_{1,1}} \) be the empirical measure on \( \mathcal{O}_2 = \mathcal{D}_{1, 1} \) where \( \mathcal{O}_1 \) and \( \mathcal{O}_2 \) are iid samples from \( P \) with $\mathcal{O}_1 \perp\!\!\!\perp \mathcal{O}_2$. Write \( E_P(h) =  \int h(o) \, d P(o \mid \mathcal{O}_1) \) for the mean of any function \( h \) (possibly data-dependent) over a new observation. Then
\[
(E_{I_{1,1}} - E_P)(\hat{f} - f) = O_{P}\left(\frac{\|\hat{f} - f\|_2}{n^{1/2}}\right).
\]
\end{lemma}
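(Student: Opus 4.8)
The plan is to exploit the sample-splitting structure that the lemma is built around: $\hat{f}$ is measurable with respect to the training/auxiliary sample $\mathcal{O}_1$, which is independent of the main sample (here $\mathcal{D}_{1,1}$, of size $n$) over which the empirical average $E_n$ is taken. The entire argument reduces to a \emph{conditional} second-moment calculation followed by Markov's inequality; no empirical-process machinery is needed precisely because cross-fitting decouples $\hat{f}$ from the evaluation sample. First I would condition on $\mathcal{O}_1$, so that $g := \hat{f} - f$ becomes a fixed, nonrandom function. Since $E_P(\cdot)$ is by definition the population mean of $g$ over a fresh draw given $\mathcal{O}_1$, and $E_n(g)$ is an average of $g$ over the $n$ i.i.d.\ main-sample points that are independent of $\mathcal{O}_1$, the centered quantity has conditional mean zero:
$$E\big[(E_n - E_P)(\hat{f} - f) \mid \mathcal{O}_1\big] = 0.$$

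Next I would bound the conditional variance. Conditionally on $\mathcal{O}_1$, $(E_n - E_P)(g)$ is an average of $n$ i.i.d.\ centered summands, so
$$E\big[\{(E_n - E_P)(\hat{f} - f)\}^2 \mid \mathcal{O}_1\big] = \tfrac{1}{n}\operatorname{Var}_P(\hat{f} - f \mid \mathcal{O}_1) \le \tfrac{1}{n}\,E_P\big[(\hat{f} - f)^2 \mid \mathcal{O}_1\big] = \tfrac{1}{n}\,\|\hat{f} - f\|_2^2,$$
where the last equality uses that $\|\hat{f} - f\|_2^2 = E_P[(\hat{f}-f)^2 \mid \mathcal{O}_1]$ is itself a functional of $\mathcal{O}_1$. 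I would then normalize: setting $W_n = n^{1/2}(E_n - E_P)(\hat{f} - f)/\|\hat{f} - f\|_2$ (and $W_n := 0$ on the null event $\|\hat{f}-f\|_2 = 0$), the display gives $E[W_n^2 \mid \mathcal{O}_1] \le 1$ almost surely, hence $E[W_n^2] \le 1$ by the tower property. Conditional Markov's inequality then yields $P(|W_n| > M) \le E[W_n^2]/M^2 \le M^{-2}$ for every $M>0$, so choosing $M = \varepsilon^{-1/2}$ shows $W_n = O_{\mathbb{P}}(1)$, which is exactly the asserted rate $(E_n - E_P)(\hat{f}-f) = O_{\mathbb{P}}(\|\hat{f}-f\|_2 / n^{1/2})$.

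The only delicate point — more a matter of careful bookkeeping than a genuine obstacle — is that the normalizing factor $\|\hat{f} - f\|_2$ is random (a functional of the training sample), so the $O_{\mathbb{P}}$ statement must be read with this data-dependent normalization, and the passage from the conditional bound to the unconditional one relies on $E[W_n^2 \mid \mathcal{O}_1]\le 1$ holding \emph{uniformly} in $\mathcal{O}_1$. I would handle this by invoking the tower property to lift $E[W_n^2 \mid \mathcal{O}_1]\le 1$ to $E[W_n^2]\le 1$, and by treating the degenerate event $\{\|\hat{f}-f\|_2 = 0\}$ separately (where the left-hand side is identically zero). Everything else is routine, and the proof is essentially the one-line cross-fitting variance argument formalized as Lemma 2 of Kennedy et al.\ (2020).
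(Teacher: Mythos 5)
Your proposal is correct and is essentially the paper's own proof: both condition on the training sample so that $\hat{f}-f$ is fixed, note the conditional mean is zero, bound the conditional second moment by $\|\hat{f}-f\|_2^2/n$, and conclude via Markov's inequality (you lift the bound unconditionally by the tower property, the paper applies Markov conditionally and then averages — the same calculation). Your explicit treatment of the degenerate event $\|\hat{f}-f\|_2=0$ is a minor tightening of bookkeeping the paper leaves implicit, not a different argument.
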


\begin{proof}
Note that
\[
E\left[E_{I_{1,1}}(\hat{f} - f)\, |\, \mathcal{O}_1\right] = E(\hat{f}(O) - f(O) \mid \mathcal{O}_1) = E_P(\hat{f} - f),
\]
by identical distribution, \( \mathcal{O}_1 \perp\!\!\!\perp \mathcal{O}_2 \), and by definition of the operator \( E_P \). Moreover,
\[
\operatorname{Var}\left[(E_{I_{1,1}} - E_P)(\hat{f} - f) \mid \mathcal{O}_1\right] = \operatorname{Var}\left[E_{I_{1,1}}(\hat{f} - f) \mid \mathcal{O}_1\right] = \frac{1}{n}\operatorname{Var}\left(\hat{f} - f \mid \mathcal{O}_1\right) \leq \frac{1}{n}\|\hat{f} - f\|_2^2,
\]
by independence and identical distribution, and using the fact that \(\operatorname{Var}(W) \leq E(W^2)\) for any random variable \( W \). Thus,
\[
E\left\{\left[(E_{I_{1,1}} - E_P)(\hat{f} - f)\right]^2 \mid \mathcal{O}_1\right\} \leq 
\frac{1}{n} \|\hat{f} - f\|_2^2,
\]
and so
\[
P\left(\frac{n^{1/2}(E_{I_{1,1}} - E_P)(\hat{f} - f)}{\|\hat{f} - f\|_2} \geq t\right) = E\left[P\left(\frac{n^{1/2}(E_{I_{1,1}} - E_P)(\hat{f} - f)}{\|\hat{f} - f\|_2} \geq t \mid \mathcal{O}_1\right)\right] \leq \frac{1}{t^2},
\]
by applying Markov's inequality conditional on \( \mathcal{O}_1 \). For any \( \varepsilon > 0 \), we can choose \( t = \frac{1}{\varepsilon^{1/2}} \) to bound this probability by \( \varepsilon \), which yields the final result.
\end{proof}

\begin{lemma}
\label{lemma: B2}
    Let \( \hat{f} \) and \( f \) take any real values. Then
    \[
    |\mathbf{1}(\hat{f} > 0) - \mathbf{1}(f > 0)| \leq \mathbf{1}(|f| \leq |\hat{f} - f|).
    \]
\end{lemma}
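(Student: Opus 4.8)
The plan is to exploit the fact that the left-hand side is itself a $\{0,1\}$-valued quantity, so the claimed inequality is only nontrivial on the event where the two indicators disagree. First I would observe that $|\mathbf{1}(\hat f>0)-\mathbf{1}(f>0)|\in\{0,1\}$, since it is the absolute difference of two indicators. Whenever this quantity equals $0$, the inequality holds automatically because the right-hand side $\mathbf{1}(|f|\le|\hat f-f|)$ is nonnegative. It therefore suffices to treat the disagreement event $\{\mathbf{1}(\hat f>0)\neq\mathbf{1}(f>0)\}$, on which the left-hand side equals $1$, and to show that there $|f|\le|\hat f-f|$, so that the right-hand indicator also equals $1$.

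Next I would split the disagreement event into two exhaustive cases according to which indicator is larger. In the first case $\hat f>0$ and $f\le 0$, so $|f|=-f$; since $\hat f>0$ we get $\hat f-f>-f=|f|\ge 0$, whence $|\hat f-f|=\hat f-f\ge|f|$. In the second case $\hat f\le 0$ and $f>0$, so $|f|=f$; since $-\hat f\ge 0$ we get $f-\hat f\ge f=|f|$, whence $|\hat f-f|=f-\hat f\ge|f|$. In both cases $|f|\le|\hat f-f|$, which is precisely the condition under which $\mathbf{1}(|f|\le|\hat f-f|)=1$, completing the argument.

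This is an elementary deterministic inequality, so I do not expect a genuine obstacle; the only point requiring care is the treatment of the boundary values $f=0$ and $\hat f=0$ under the strict convention $\mathbf{1}(\cdot>0)$. Specifically, the case split above must use $f\le 0$ (not $f<0$) and $\hat f\le 0$ in order to cover these boundary configurations, and the non-strict inequality in the target indicator $\mathbf{1}(|f|\le|\hat f-f|)$ is exactly what absorbs the equality $|\hat f-f|=|f|$ that can occur when $\hat f=0,\,f>0$ or symmetrically. Verifying that the two cases are genuinely exhaustive on the disagreement event, and that each yields the non-strict bound, is the entirety of the bookkeeping.
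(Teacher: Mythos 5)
Your proof is correct and takes essentially the same route as the paper: reduce to the event where the two indicators disagree and show that there $|f| \le |\hat f - f|$. The only cosmetic difference is that you argue via an explicit two-case split (with careful handling of $f=0$ or $\hat f = 0$), whereas the paper invokes the identity $|\hat f| + |f| = |\hat f - f|$ for (weakly) opposite signs; your version is, if anything, slightly more careful about the boundary cases.
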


\begin{proof}
    This follows since
    \[
    |\mathbf{1}(\hat{f} > 0) - \mathbf{1}(f > 0)| = 
    \mathbf{1}(\hat{f}, f \text{ have opposite sign}),
    \]
    and if \( \hat{f} \) and \( f \) have opposite signs, then
    \[
    |\hat{f}| + |f| = |\hat{f} - f|,
    \]
    which implies that \( |f| \leq |\hat{f} - f| \). Therefore, whenever \( |\mathbf{1}(\hat{f} > 0) - \mathbf{1}(f > 0)| = 1 \), it must also be the case that \( \mathbf{1}(|f| \leq |\hat{f} - f|) = 1 \), which yields the result.
\end{proof}

\begin{lemma}
\label{lemma: B3}
For any $\lambda \in (0, 1]$, we have the following results:
\begin{itemize}
    \item[(i)] Under Assumption 3(a) and 3(c), 
    \[
        \left\|\hat{\pi}_{Y, \lambda}(X) - \pi_{Y, \lambda}(X)\right\|_2 = o_{P}(1).
    \]
    \item[(ii)] Under Assumption 3(b) and 3(d), 
    \[
        \left\|\hat{\pi}_{S, \lambda}(X) - \pi_{S, \lambda}(X)\right\|_2 = o_{P}(1).
    \]
\end{itemize}
\end{lemma}


\begin{proof}
Observe that
\begin{align}
    \left\|\hat{\pi}_{Y, \lambda}(X) - \pi_{Y, \lambda}(X)\right\|_2^2 &= 
    E_P\left(\left[ \mathbf{1}(\hat{\tau}_Y > \hat{y}_{1 - \lambda})\mathbf{1}(\hat{\tau}_Y > 0) - 
    \mathbf{1}(\tau_Y > y_{1-\lambda})\mathbf{1}(\tau_Y > 0) \right]^2\right) \notag
    \\
    &= 
    E_P\left| \mathbf{1}(\hat{\tau}_Y > \hat{y}_{1 - \lambda})\mathbf{1}(\hat{\tau}_Y > 0) - 
    \mathbf{1}(\tau_Y > y_{1-\lambda})\mathbf{1}(\tau_Y > 0) \right| \notag
    \\
    &= E_P\left|
    \left[\mathbf{1}(\hat{\tau}_Y > \hat{y}_{1 - \lambda}) - 
    \mathbf{1}(\tau_Y > y_{1-\lambda})\right]
    \mathbf{1}(\hat{\tau}_Y > 0) + 
    \mathbf{1}(\tau_Y > y_{1-\lambda})
    \left[\mathbf{1}(\hat{\tau}_Y > 0) - 
    \mathbf{1}(\tau_Y > 0)\right]\right| \notag
    \\
    &\leq E_P\left|
    \mathbf{1}(\hat{\tau}_Y - \hat{y}_{1 - \lambda} > 0) - 
    \mathbf{1}(\tau_Y - y_{1-\lambda} > 0)\right| + 
    E_P\left|\mathbf{1}(\hat{\tau}_Y > 0) - 
    \mathbf{1}(\tau_Y > 0)\right|. \label{eq:lem6}
\end{align}
By applying Lemma \ref{lemma: B2} to both terms, we obtain:
\begin{align}
    \eqref{eq:lem6}
    &\leq P_X\left(|\tau_Y - y_{1-\lambda}| < |\hat{\tau}_Y - \tau_Y| + 
    |\hat{y}_{1-\lambda} - y_{1-\lambda}|\right) + 
    P_X(|\tau_Y| < |\hat{\tau}_Y - \tau_Y|) \notag
    \\
    &\leq P_X\left(|\tau_Y - y_{1-\lambda}| \leq t\right) + 
    P_X\left(|\hat{\tau}_Y - \tau_Y| + 
    |\hat{y}_{1-\lambda} - y_{1-\lambda}| > t\right) + P_X\left(|\tau_Y| \leq t\right) + 
    P_X\left(|\hat{\tau}_Y - \tau_Y| > t\right), \notag
\end{align}
Finally, by applying Assumption 3(a) and 3(c), and using Markov's inequality, we have:
\begin{align}
    \left\|\hat{\pi}_{Y, \lambda}(X) - \pi_{Y, \lambda}(X)\right\|_2^2 
    &\lesssim t^{\min\{\alpha_1, \beta_1\}} + 
    \frac{2E_P\left|\hat{\tau}_Y - \tau_Y\right| + 
    \left|\hat{y}_{1-\lambda} - y_{1-\lambda}\right|}{t} = o_{P}(1). \notag
\end{align}
The convergence of $\left\|\hat{\pi}_{S, \lambda}(X) - \pi_{S, \lambda}(X)\right\|_2$ can be established analogously by replacing $\tau_Y$ and $y_{1-\lambda}$ with $\tau_S$ and $s_{1-\lambda}$, respectively.
\end{proof}

\begin{corollary}
\label{cor: lambda_1_convergence}
When $\lambda = 1$, $\pi_{Y, \lambda}(X)$ and $\pi_{S, \lambda}(X)$ reduce to $\pi_{Y}(X)$ and $\pi_{S}(X)$, respectively. We have the following results:

\begin{itemize}
    \item[(i)] Under Assumption 3(a), 
    \[
        \left\|\hat{\pi}_{Y}(X) - \pi_{Y}(X)\right\|_2 = o_{P}(1).
    \]
    \item[(ii)] Under Assumption 3(b), 
    \[
        \left\|\hat{\pi}_{S}(X) - \pi_{S}(X)\right\|_2 = o_{P}(1).
    \]
\end{itemize}
\end{corollary}

\begin{proof}
We focus on the proof for (i), as (ii) is analogous.
For $\lambda = 1$, the estimation error simplifies to
\begin{align}
    \left\|\hat{\pi}_{Y}(X) - \pi_{Y}(X)\right\|_2^2 
    &= E_P\left| \mathbf{1}(\hat{\tau}_Y > 0) - \mathbf{1}(\tau_Y > 0) \right|. \notag
\end{align}
Proceeding analogously to the proof of Lemma \ref{lemma: B3}, and invoking Lemma \ref{lemma: B2} under Assumption 3(a), we have
\begin{align}
    \left\|\hat{\pi}_{Y}(X) - \pi_{Y}(X)\right\|_2^2 
    &\lesssim t^{\alpha_1} + \frac{E_P\left|\hat{\tau}_Y - \tau_Y\right|}{t} =o_P(1). \notag
\end{align}
\end{proof}

\begin{lemma}
\label{lemma: B4}
Assume that $\|\tau_Y(X)\|_{\infty} \leq M < \infty$. We have the following results:
\begin{itemize}
    \item[(i)] Under Assumption 3(a), 
    \begin{equation}
        E_P\left|\tau_Y(X)\left[\hat{\pi}_Y(X) - \pi_Y(X)\right]\right| \leq \| \tau_Y - \hat{\tau}_Y \|^{1 + \alpha_1}_{\infty}. \notag
    \end{equation}  
    \item[(ii)] Under Assumption 3(b), 
    \begin{equation}
        E_P\left|\tau_Y(X)\left[\hat{\pi}_S(X) - \pi_S(X)\right]\right| \lesssim \|\tau_S - \hat{\tau}_S\|_{\infty}^{\alpha_2}. \notag
    \end{equation}  
\end{itemize}

\end{lemma}

\begin{proof}
    For part (i), applying Lemma \ref{lemma: B2} yields
    \begin{align*}
       E_P\left|\tau_Y(X)\left[\hat{\pi}_Y(X) - \pi_Y(X)\right]\right| &=
        E_P\left| \tau_Y\left[\mathbf{1}(\tau_Y > 0) - 
        \mathbf{1}(\hat{\tau}_Y > 0)\right] \right|
        \\
        &\leq E_P \left|\tau_Y\, \mathbf{1}(|\tau_Y| 
        \leq |\tau_Y - \hat{\tau}_Y|)\right|
        \\
        &\leq \| \tau_Y - \hat{\tau}_Y \|^{1 + \alpha_1}_{\infty},
    \end{align*}
    where the last inequality follows from Assumption 3(a).
    
    For part (ii), Lemma \ref{lemma: B2} and the bound $\|\tau_Y(X)\|_{\infty} \leq M$ imply
    \begin{align*}
    E_P\left|\tau_Y\left[\hat{\pi}_S(X) - \pi_S(X)\right]\right| &\leq 
    M E_P\left| \mathbf{1}(\tau_S > 0) - 
    \mathbf{1}(\hat{\tau}_S > 0) \right|
    \\
    &\leq MP_X\left(|\tau_S| \leq |\tau_S - \hat{\tau}_S|\right)
    \\
    &\lesssim \|\tau_S - \hat{\tau}_S\|_{\infty}^{\alpha_2},
    \end{align*}
    where the final step is implied by Assumption 3(b).
\end{proof}

\begin{lemma}
\label{lemma: B5}
Assume that $\|\tau_Y(X)\|_{\infty} \leq M < \infty$. We have the following results:
\begin{itemize}
    \item[(i)] Under Assumptions 3(a) and 3(c),     
    $$ E_P\left|\tau_Y(X)\left[\hat{\pi}_{Y, \lambda}(X) - \pi_{Y, \lambda}(X)\right]\right| 
    \lesssim \| \tau_Y - \hat{\tau}_Y \|^{1 + \alpha_1}_{\infty} + 
    \left(\|\tau_Y - \hat{\tau}_Y\|_{\infty} + |y_{1-\lambda} - \hat{y}_{1-\lambda}|\right)^{\beta_1}. $$
    \item[(ii)] Under Assumptions 3(b) and 3(d), 
    $$ E_P\left|\tau_Y(X)\left[\hat{\pi}_{S, \lambda}(X) - \pi_{S, \lambda}(X)\right]\right| \lesssim
    \|\tau_S - \hat{\tau}_S\|_{\infty}^{\alpha_2}
    + \left(\|\tau_S - \hat{\tau}_S\|_{\infty} + |s_{1-\lambda} - \hat{s}_{1-\lambda}|\right)^{\beta_2}. $$
\end{itemize}
\end{lemma}

\begin{proof}
    For result (i), by the triangle inequality, we decompose the term as follows:
    \begin{align}
        &E_P\left|\tau_Y(X)\left[\pi_{Y, \lambda}(X) - \hat{\pi}_{Y, \lambda}(X)\right]\right| \notag \\
        &= E_P\left| \tau_Y\left[\mathbf{1}(\tau_Y > 0)\mathbf{1}(\tau_Y > y_{1-\lambda}) - \mathbf{1}(\hat{\tau}_Y > 0)\mathbf{1}(\hat{\tau}_Y > \hat{y}_{1-\lambda}) \right]\right|\notag
        \\
        &\leq \underbrace{E_P\left|\tau_Y\,\mathbf{1}(\tau_Y > y_{1-\lambda}) \left[ \mathbf{1}(\tau_Y > 0) - \mathbf{1}(\hat{\tau}_Y > 0) \right]\right|}_{T_1} \notag
        \\ 
        &\quad + \underbrace{E_P\left|\tau_Y\,\mathbf{1}(\hat{\tau}_Y > 0) \left[ \mathbf{1}(\tau_Y > y_{1-\lambda}) - \mathbf{1}(\hat{\tau}_Y > \hat{y}_{1-\lambda}) \right]\right|}_{T_2}.\notag
    \end{align}
    For term $T_1$, by Lemma \ref{lemma: B4}, we have
    \begin{align}
        T_1 
        &\leq E_P\left|\tau_Y \left[ \mathbf{1}(\tau_Y > 0) - \mathbf{1}(\hat{\tau}_Y > 0) \right]\right| \notag \\
        &\leq \| \tau_Y - \hat{\tau}_Y \|^{1 + \alpha_1}_{\infty}.\notag
    \end{align}
    For term $T_2$, since $\|\tau_Y(X)\|_\infty \le M$, under Assumption 3(c), applying Lemma \ref{lemma: B2} yields
    \begin{align}
        T_2 &\lesssim E_P\left|\mathbf{1}(\tau_Y - y_{1-\lambda} > 0) - \mathbf{1}(\hat{\tau}_Y - \hat{y}_{1-\lambda} > 0) \right|\notag
        \\
        &\lesssim P_X\left(|\tau_Y - y_{1-\lambda}| \leq \left|\tau_Y - \hat{\tau}_Y - (y_{1-\lambda} - \hat{y}_{1-\lambda})\right|\right)
        \notag
        \\
        &\leq P_X\left(|\tau_Y - y_{1-\lambda}| \leq \left|\tau_Y - \hat{\tau}_Y\right| + \left|y_{1-\lambda} - \hat{y}_{1-\lambda}\right|\right)
        \notag
        \\
        &\lesssim \left(\|\tau_Y - \hat{\tau}_Y\|_{\infty} + |y_{1-\lambda} - \hat{y}_{1-\lambda}|\right)^{\beta_1}.
        \notag
    \end{align}
    Combining these bounds, result (i) follows.

    For result (ii), similarly, we decompose the term for $S$:
    \begin{align}
        &E_P\left|\tau_Y(X)\left[\pi_{S, \lambda}(X) - \hat{\pi}_{S, \lambda}(X)\right]\right| \notag \\
        &\leq \underbrace{E_P\left|\tau_Y\mathbf{1}(\tau_S > s_{1-\lambda}) \left[ \mathbf{1}(\tau_S > 0) - \mathbf{1}(\hat{\tau}_S > 0) \right]\right|}_{T_3} \notag
        \\ 
        &\quad + \underbrace{E_P\left|\tau_Y\mathbf{1}(\hat{\tau}_S > 0) \left[ \mathbf{1}(\tau_S > s_{1-\lambda}) - \mathbf{1}(\hat{\tau}_S > \hat{s}_{1-\lambda}) \right]\right|}_{T_4}.\notag
    \end{align}
    For term $T_3$, given Lemma \ref{lemma: B4}, we have
    \begin{align}
        T_3 
        &\leq E_P\left|\tau_Y \left[ \mathbf{1}(\tau_S > 0) - \mathbf{1}(\hat{\tau}_S > 0) \right]\right| \notag \\
        &\lesssim \|\tau_S - \hat{\tau}_S\|_{\infty}^{\alpha_2}.\notag
    \end{align}
    For term $T_4$, by Assumption 3(d), we have
    \begin{align}
        T_4 &\lesssim 
        E_P\left|\mathbf{1}(\tau_S > s_{1-\lambda}) - 
        \mathbf{1}(\hat{\tau}_S > \hat{s}_{1-\lambda}) \right|\notag
        \\
        &\lesssim P_X\left(|\tau_S - s_{1-\lambda}| < \left|\tau_S - \hat{\tau}_S - (s_{1-\lambda} - \hat{s}_{1-\lambda})\right|\right)\notag
        \\
        &\lesssim \left(\|\tau_S - \hat{\tau}_S\|_{\infty} + |s_{1-\lambda} - \hat{s}_{1-\lambda}|\right)^{\beta_2}.\notag
    \end{align}
    Combining these results, result (ii) follows.
\end{proof}

\section{Proofs of Results in Section 5}
\label{appendix A.2}

In this section, we provide proofs for Theorems 1, 2, 3, and 4. 

\subsection{Proof of Theorem 1}
\label{pf:1}
\begin{proof}
Observe that 
\begin{align}
    \hat{R} - R &= 
    E_{I_{1,1}}\left[\phi_R(Y, A, X;\, \hat{\eta})\right] - E_P\left[\phi_R(Y, A, X;\, \eta)\right]\notag
    \\
    &= (E_{I_{1,1}}-E_P)\left[\phi_R(Y, A, X;\, \eta)\right] \nonumber\\
    &\quad + \underbrace{(E_{I_{1,1}}-E_P)\left[\phi_R(Y, A, X;\, \hat{\eta}) - \phi_R(Y, A, X;\, \eta)\right]}_{T_1} \nonumber
    \\
    &\quad + \underbrace{E_P\left[\phi_R(Y, A, X;\, \hat{\eta}) - \phi_R(Y, A, X;\, \eta)\right]}_{T_2}. \notag
\end{align}

Next, we analyze the term $T_1$. Given the sample splitting, $\hat{\eta}$ is independent of the data used in the empirical measure $E_{I_{1,1}}$. By Lemma \ref{lemma: B1}, to show $T_1 = o_{P}(n^{-1/2})$, it suffices to prove that the $L_2$ norm of the difference converges to zero in probability. The difference can be decomposed as:

\begin{align}
    &\left\| \phi_R(Y, A, X;\, \hat{\eta}) - \phi_R(Y, A, X;\, \eta) \right\|_2 \notag
    \\
    &= 
    \| (\hat{\pi}_Y - \hat{\pi}_S)
    \left\{ \frac{A}{\hat{e}} - \frac{1 - A}{1 - \hat{e}} \right\}(Y - \hat{\mu}_A) + 
    \hat{\tau}_Y(\hat{\pi}_Y - \hat{\pi}_S)\notag \\
    &\quad\quad - (\pi_Y - \pi_S)
    \left\{ \frac{A}{e} - \frac{1 - A}{1 - e} \right\}(Y - \mu_A) - \tau_Y(\pi_Y - \pi_S)\|_2. \label{eq:1}
\end{align}
By the triangle inequality, the Cauchy-Schwarz inequality, and the boundedness of $\tau_Y$, we have
\begin{align}
    \eqref{eq:1} &\leq \left\| (\hat{\pi}_Y - \hat{\pi}_S)\left[ \left\{ \frac{A}{\hat{e}} - \frac{1 - A}{1 - \hat{e}} \right\}(Y - \hat{\mu}_A) - 
    \left\{ \frac{A}{e} - \frac{1 - A}{1 - e} \right\}(Y - \mu_A) + \hat{\tau}_Y - \tau_Y \right] \right\|_2\notag
    \\&\quad\quad + \left\| \left[\hat{\pi}_Y - \hat{\pi}_S - (\pi_Y - \pi_S)\right]\left\{ \frac{A}{e} - \frac{1 - A}{1 - e} \right\} (Y - \mu_A) + 
    \tau_Y\left[\hat{\pi}_Y - \hat{\pi}_S - (\pi_Y - \pi_S)\right] \right\|_2\notag
    \\
    &\lesssim \|\hat{\mu}_0 - \mu_0\|_2 + \|\hat{\mu}_1 - \mu_1\|_2 + 
    \|\hat{e} - e\|_2 + 
    \|\hat{\pi}_Y - \pi_Y\|_2 + \|\hat{\pi}_S - \pi_S\|_2\notag
    \\&= o_{P}(1). \label{eq:4}
\end{align}
Combining Corollary \ref{cor: lambda_1_convergence} with the consistency of other nuisance parameters yields $\eqref{eq:4} = o_{P}(1)$, which leads to $T_1 = o_{P}(n^{-1/2})$.

It remains to analyze the bias term $T_2$. Using Lemma 2 and applying the law of total expectation yields
\begin{align}
    T_2& = E_P\left[\phi_R(Y, A, X;\, \hat{\eta}) - \phi_R(Y, A, X;\, \eta)\right]\notag
    \\
    &= E_P\left\{
    E_P\left[\phi_R(Y, A, X;\, \hat{\eta}) - \phi_R(Y, A, X;\, \eta)\mid X \right] 
    \right\}\notag
    \\
    & = E_P\left\{ (\hat{\pi}_Y - \hat{\pi}_S)\left[ \frac{(\hat{e} - e)(\hat{\mu}_1 - \mu_1)}{\hat{e}} + 
    \frac{(\hat{e} - e)(\hat{\mu}_0 - \mu_0)}{1 - \hat{e}} \right] + \tau_Y\left[\hat{\pi}_Y - \hat{\pi}_S - (\pi_Y - \pi_S)\right] \right\}\notag
    \\
    & \lesssim \left\|\hat{e} - e\right\|_2 \cdot 
    \left\|\hat{\mu}_1 - \mu_1\right\|_2+
    \left\|\hat{e} - e\right\|_2 \cdot \left\|\hat{\mu}_0 - \mu_0\right\|_2 +
    E_P\left|\tau_Y\,(\hat{\pi}_Y - \pi_Y)\right| + E_P\left|\tau_Y\,(\hat{\pi}_S - \pi_S)\right|\notag
    \\
    & \lesssim \left\|\hat{e} - e\right\|_2 \cdot 
    \left(\left\|\hat{\mu}_1 - \mu_1\right\|_2+ \left\|\hat{\mu}_0 - \mu_0\right\|_2\right) +
     \| \tau_Y - \hat{\tau}_Y \|^{1 + \alpha_1}_{\infty} + \|\tau_S - \hat{\tau}_S\|_{\infty}^{\alpha_2} \notag
    \\
    & = D_{1, n} + D_{2, n} + D_{3, n}, \notag
\end{align}
where the last step follows from Lemma \ref{lemma: B4}.

Since $\phi_R(Y, A, X;\, \eta)$ is a fixed, square-integrable function evaluated at the true parameter $\eta$, an application of the standard Central Limit Theorem (CLT) yields that $(E_{I_{1,1}}-E_P)\left[\phi_R(Y, A, X;\, \eta)\right] = O_P(n^{-1/2})$ and converges in distribution to a zero-mean normal random variable. 
\end{proof}

\begin{proof}[ of Proposition 1]
From Theorem 1, we have the decomposition 
\[\hat{R} - R = (E_{I_{1,1}} - E_P)[\phi_R(Y, A, X; \eta)] + T_2 + o_P(n^{-1/2}).\] 
Under the condition $D_{1,n} + D_{2,n} + D_{3,n} = o_P(n^{-1/2})$, it follows that \(n^{1/2}\left[T_2 + o_P(n^{-1/2})\right] = o_P(1).\)
As established at the end of the proof of Theorem 1, the leading empirical process term is asymptotically normal by the CLT. The desired result then follows immediately from Slutsky's theorem.
\end{proof}

\subsection{Proof of Theorem 2}
The proofs of Theorems 2, 3 and 4 follow the same structure as the proof of Theorem 1.
\begin{proof}
We adopt the same decomposition as in Theorem 1:
\begin{align}
    \hat{R}(\lambda) - R(\lambda) 
    &= E_{I_{1,1}}\left[\phi_{R, \lambda}(Y, A, X;\, \hat{\eta})\right] - E_P\left[\phi_{R, \lambda}(Y, A, X;\, \eta)\right]\notag
    \\
    &= (E_{I_{1,1}}-E_P)\left[\phi_{R, \lambda}(Y, A, X;\, \eta)\right] \notag\\
    &\quad + \underbrace{(E_{I_{1,1}}-E_P)\left[\phi_{R, \lambda}(Y, A, X;\, \hat{\eta}) - 
    \phi_{R, \lambda}(Y, A, X;\, \eta)\right]}_{T_1} \notag\\
    &\quad + \underbrace{E_P\left[\phi_{R, \lambda}(Y, A, X;\, \hat{\eta}) - 
    \phi_{R, \lambda}(Y, A, X;\, \eta)\right]}_{T_2}.\notag
\end{align}

Next, we analyze the term $T_1$. Analogous to the proof of Theorem 1, applying Lemma \ref{lemma: B1} requires bounding the $L_2$ norm difference:

\begin{align}
    &\left\| \phi_{R, \lambda}(Y, A, X;\, \hat{\eta}) - \phi_{R, \lambda}(Y, A, X;\, \eta) \right\|_2 \notag
    \\
    &= 
    \bigg\| (\hat{\pi}_{Y, \lambda} - 
    \hat{\pi}_{S, \lambda})
    \left\{ \frac{A}{\hat{e}} - 
    \frac{1 - A}{1 - \hat{e}} \right\}(Y - \hat{\mu}_A) + 
    \hat{\tau}_Y(\hat{\pi}_{Y, \lambda} - 
    \hat{\pi}_{S, \lambda}) \notag
    \\ 
    &\quad\quad - (\pi_{Y, \lambda} - \pi_{S, \lambda})
    \left\{ \frac{A}{e} - \frac{1 - A}{1 - e} \right\}(Y - \mu_A) - \tau_Y(\pi_{Y, \lambda} - \pi_{S, \lambda})\bigg\|_2\notag
    \\
    &\leq \left\| (\hat{\pi}_{Y, \lambda} - \hat{\pi}_{S, \lambda})\left[ \left\{ \frac{A}{\hat{e}} - \frac{1 - A}{1 - \hat{e}} \right\}(Y - \hat{\mu}_A) - 
    \left\{ \frac{A}{e} - \frac{1 - A}{1 - e} \right\}(Y - \mu_A) + \hat{\tau}_Y - \tau_Y \right] \right\|_2\notag
    \\&\quad\quad + \left\| 
    \left[\hat{\pi}_{Y, \lambda} -
    \hat{\pi}_{S, \lambda} - (\pi_{Y, \lambda} - \pi_{S, \lambda})\right]\left\{ \frac{A}{e} - \frac{1 - A}{1 - e} \right\} (Y - \mu_A) + 
    \tau_Y\left[\hat{\pi}_{Y, \lambda} - \hat{\pi}_{S, \lambda} - (\pi_{Y, \lambda} - \pi_{S, \lambda})\right] \right\|_2\notag
    \\
    &\lesssim \|\hat{\mu}_0 - \mu_0\|_2 + \|\hat{\mu}_1 - \mu_1\|_2 + 
    \|\hat{e} - e\|_2 + 
    \|\hat{\pi}_{Y, \lambda} - \pi_{Y, \lambda}\|_2 + 
    \|\hat{\pi}_{S, \lambda} - \pi_{S, \lambda}\|_2. \label{eq:2}
\end{align}
Combining Lemma \ref{lemma: B3} with the consistency of other nuisance parameters yields $\eqref{eq:2} = o_{P}(1)$, which leads to $T_1 = o_{P}(n^{-1/2})$.

Proceeding as in the proof of Theorem 1, Lemma \ref{lemma: A1} and the law of total expectation yield:
\begin{align}
    T_2 &= E_P\left\{
    E_P\left[\phi_{R, \lambda}(Y, A, X;\, \hat{\eta}) - \phi_{R, \lambda}(Y, A, X;\, \eta)\, |\, X \right] 
    \right\}\notag\\
    &= E_P\left\{ (\hat{\pi}_{Y, \lambda} - \hat{\pi}_{S, \lambda})\left[ \frac{(\hat{e} - e)(\hat{\mu}_1 - \mu_1)}{\hat{e}} + 
    \frac{(\hat{e} - e)(\hat{\mu}_0 - \mu_0)}{1 - \hat{e}} \right] + \tau_Y\left[\hat{\pi}_{Y, \lambda} - 
    \hat{\pi}_{S, \lambda} - 
    (\pi_{Y, \lambda} - \pi_{S, \lambda})\right] \right\}. \notag
\end{align}
By the boundedness of $\tau_Y$ and applying Lemma \ref{lemma: B5}, we have
\begin{align}
    T_2 &\lesssim \left\|\hat{e} - e\right\|_2 \cdot\left\|\hat{\mu}_1 - \mu_1\right\|_2+ \left\|\hat{e} - e\right\|_2 
    \cdot\left\|\hat{\mu}_0 - \mu_0\right\|_2 +
    E_P\left|\tau_Y\left(\hat{\pi}_{Y, \lambda} - 
    \pi_{Y, \lambda}\right)\right| + E_P\left|\tau_Y\left(\hat{\pi}_{S, \lambda} - \pi_{S, \lambda}\right)\right|\notag
    \\
    & \lesssim \left\|\hat{e} - e\right\|_2 \cdot 
    \left(\left\|\hat{\mu}_1 - \mu_1\right\|_2+ \left\|\hat{\mu}_0 - \mu_0\right\|_2\right) +
    \| \tau_Y - \hat{\tau}_Y \|^{1 + \alpha_1}_{\infty} + 
    \|\tau_S - \hat{\tau}_S\|_{\infty}^{\alpha_2} \notag
    \\
    &\quad
    + \left(\|\tau_Y - \hat{\tau}_Y\|_{\infty} + 
    |y_{1-\lambda} - \hat{y}_{1-\lambda}|\right)^{\beta_1}
    + \left(\|\tau_S - \hat{\tau}_S\|_{\infty} + 
    |s_{1-\lambda} - \hat{s}_{1-\lambda}|\right)^{\beta_2}
    \notag
    \\
    & = D_{1, n} + D_{2, n} + D_{3, n} + D_{4, n} + D_{5, n}. \notag
\end{align}
Similar to the proof of Theorem 1, since $\phi_{R, \lambda}(Y, A, X;\, \eta)$ is fixed and square-integrable, the standard CLT yields that $(E_{I_{1,1}}-E_P)\left[\phi_{R, \lambda}(Y, A, X;\, \eta)\right] = O_P(n^{-1/2})$ and its asymptotic normality.
\end{proof}

\begin{proof}[ of Proposition 2]
Proceeding as in the proof of Proposition 1, the result follows directly from the CLT applied to the leading term and an application of Slutsky's theorem, given that the remainder terms are \(o_P(n^{-1/2})\).
\end{proof}

\subsection{Proof of Theorem 3}
\begin{proof}
We adopt the same decomposition as in Theorems 1 and 2:
\begin{align}
    \hat{G}(\lambda) - G(\lambda) 
    &= E_{I_{1,1}}\left[\phi_{G, \lambda}(Y, A, X;\, \hat{\eta})\right] - E_P\left[\phi_{G, \lambda}(Y, A, X;\, \eta)\right]\notag
    \\
    &= (E_{I_{1,1}}-E_P)\left[\phi_{G, \lambda}(Y, A, X;\, \eta)\right] \notag\\
    &\quad + \underbrace{(E_{I_{1,1}}-E_P)\left[\phi_{G, \lambda}(Y, A, X;\, \hat{\eta}) - 
    \phi_{G, \lambda}(Y, A, X;\, \eta)\right]}_{T_1} \notag\\
    &\quad + \underbrace{E_P\left[\phi_{G, \lambda}(Y, A, X;\, \hat{\eta}) - 
    \phi_{G, \lambda}(Y, A, X;\, \eta)\right]}_{T_2}.\notag
\end{align}

Next, we analyze the term $T_1$. Analogous to the proof of Theorem 2, applying Lemma \ref{lemma: B1} requires bounding the $L_2$ norm difference:

\begin{align}
    &\left\| \phi_{G, \lambda}(Y, A, X;\, \hat{\eta}) - \phi_{G, \lambda}(Y, A, X;\, \eta) \right\|_2 \notag
    \\
    &= 
    \bigg\| \hat{\pi}_{S, \lambda}
    \left\{ \frac{A}{\hat{e}} - 
    \frac{1 - A}{1 - \hat{e}} \right\}(Y - \hat{\mu}_A) + 
    \hat{\tau}_Y \hat{\pi}_{S, \lambda} \notag
    \\ 
    &\quad\quad - \pi_{S, \lambda}
    \left\{ \frac{A}{e} - \frac{1 - A}{1 - e} \right\}(Y - \mu_A) - \tau_Y \pi_{S, \lambda}\bigg\|_2\notag
    \\
    &\leq \left\| \hat{\pi}_{S, \lambda}\left[ \left\{ \frac{A}{\hat{e}} - \frac{1 - A}{1 - \hat{e}} \right\}(Y - \hat{\mu}_A) - 
    \left\{ \frac{A}{e} - \frac{1 - A}{1 - e} \right\}(Y - \mu_A) + \hat{\tau}_Y - \tau_Y \right] \right\|_2\notag
    \\&\quad\quad + \left\| 
    (\hat{\pi}_{S, \lambda} - \pi_{S, \lambda})\left\{ \frac{A}{e} - \frac{1 - A}{1 - e} \right\} (Y - \mu_A) + 
    \tau_Y(\hat{\pi}_{S, \lambda} - \pi_{S, \lambda}) \right\|_2\notag
    \\
    &\lesssim \|\hat{\mu}_0 - \mu_0\|_2 + \|\hat{\mu}_1 - \mu_1\|_2 + 
    \|\hat{e} - e\|_2 + 
    \|\hat{\pi}_{S, \lambda} - \pi_{S, \lambda}\|_2
    \\ 
    &\lesssim \|\hat{\pi}_{S, \lambda} - \pi_{S, \lambda}\|_2 + 
    o_{P}(1). \label{eq:3}
\end{align}
Combining Lemma \ref{lemma: B3} with the consistency of other nuisance parameters yields $\eqref{eq:3} = o_{P}(1)$, which leads to $T_1 = o_{P}(n^{-1/2})$.

For the term $T_2$, proceeding as in the proofs of Theorems 1 and 2, we apply Lemma \ref{lemma: A2} for the bias decomposition, followed by Lemma \ref{lemma: B5} (ii) and the boundedness of $\tau_Y$:
\begin{align}
    T_2 &= E_P\left\{ \hat{\pi}_{S, \lambda}\left[ \frac{(\hat{e} - e)(\hat{\mu}_1 - \mu_1)}{\hat{e}} + 
    \frac{(\hat{e} - e)(\hat{\mu}_0 - \mu_0)}{1 - \hat{e}} \right] + \tau_Y(\hat{\pi}_{S, \lambda} - \pi_{S, \lambda}) \right\} \notag \\
    &\lesssim \left\|\hat{e} - e\right\|_2 \left(\left\|\hat{\mu}_1 - \mu_1\right\|_2+ 
    \left\|\hat{\mu}_0 - \mu_0\right\|_2\right) +
    E_P\left|\tau_Y\left(\hat{\pi}_{S, \lambda} - \pi_{S, \lambda}\right)\right|\notag
    \\
    & \lesssim \left\|\hat{e} - e\right\|_2 \left(\left\|\hat{\mu}_1 - \mu_1\right\|_2+ 
    \left\|\hat{\mu}_0 - \mu_0\right\|_2\right) + \|\tau_S - \hat{\tau}_S\|_{\infty}^{\alpha_2} + \left(\|\tau_S - \hat{\tau}_S\|_{\infty} + 
        |s_{1-\lambda} - \hat{s}_{1-\lambda}|\right)^{\beta_2}\notag
    \\
    & = D_{1, n} + D_{3, n} + D_{5, n}.\notag
\end{align}
By the same CLT argument as in Theorem 1, it follows that $(E_{I_{1,1}}-E_P)\left[\phi_{G, \lambda}(Y, A, X;\, \eta)\right] = O_P(n^{-1/2})$ and converges in distribution to a zero-mean normal random variable.
\end{proof}

\begin{proof}[ of Proposition 3]
The proof follows exactly the same arguments as those for Proposition 1.
\end{proof}

\subsection{Proof of Theorem 4}
\begin{proof}
We begin with the decomposition:
\begin{align}
    \hat{V}(\lambda) - V(\lambda) 
    &= (E_{I_{1,1}}-E_P)\left[\phi_{V, \lambda}(Y, A, X;\, \eta)\right] \notag\\
    &\quad + \underbrace{(E_{I_{1,1}}-E_P)\left[\phi_{V, \lambda}(Y, A, X;\, \hat{\eta}) - 
    \phi_{V, \lambda}(Y, A, X;\, \eta)\right]}_{T_1} \notag\\
    &\quad + \underbrace{E_P\left[\phi_{V, \lambda}(Y, A, X;\, \hat{\eta}) - 
    \phi_{V, \lambda}(Y, A, X;\, \eta)\right]}_{T_2}.\notag
\end{align}

First, to show $T_1 = o_{P}(n^{-1/2})$ via Lemma \ref{lemma: B1}, we bound the $L_2$ norm difference:
\begin{align}
    &\left\| \phi_{V, \lambda}(Y, A, X;\, \hat{\eta}) - \phi_{V, \lambda}(Y, A, X;\, \eta) \right\|_2 \notag
    \\
    &\leq \left\| (\hat{\pi}_{S, \lambda} - \lambda)\left[ \left\{ \frac{A}{\hat{e}} - \frac{1 - A}{1 - \hat{e}} \right\}(Y - \hat{\mu}_A) - 
    \left\{ \frac{A}{e} - \frac{1 - A}{1 - e} \right\}(Y - \mu_A) + \hat{\tau}_Y - \tau_Y \right] \right\|_2\notag
    \\&\quad\quad + \left\| 
    (\hat{\pi}_{S, \lambda} - \pi_{S, \lambda})
    \left\{ \frac{A}{e} - \frac{1 - A}{1 - e} \right\} (Y - \mu_A) + 
    \tau_Y(\hat{\pi}_{S, \lambda} - \pi_{S, \lambda}) \right\|_2\notag
    \\
    &\lesssim \|\hat{\mu}_0 - \mu_0\|_2 + \|\hat{\mu}_1 - \mu_1\|_2 + 
    \|\hat{e} - e\|_2 + 
    \|\hat{\pi}_{S, \lambda} - \pi_{S, \lambda}\|_2 \notag \\
    &= o_{P}(1). \notag
\end{align}

Next, we apply Lemma \ref{lemma: A3} to decompose the term $T_2$. Then, using Lemma \ref{lemma: B5} (ii) and the boundedness of $\tau_Y$, we obtain:
\begin{align}
    T_2 &= E_P\left\{ (\hat{\pi}_{S, \lambda} - \lambda)\left[ \frac{(\hat{e} - e)(\hat{\mu}_1 - \mu_1)}{\hat{e}} + 
    \frac{(\hat{e} - e)(\hat{\mu}_0 - \mu_0)}{1 - \hat{e}} \right] + 
    \tau_Y(\hat{\pi}_{S, \lambda} - \pi_{S, \lambda}) \right\}\notag
    \\
    &\lesssim \left\|\hat{e} - e\right\|_2 \left(\left\|\hat{\mu}_1 - \mu_1\right\|_2+ 
    \left\|\hat{\mu}_0 - \mu_0\right\|_2\right) +
    E_P\left|\tau_Y\left(\hat{\pi}_{S, \lambda} - \pi_{S, \lambda}\right)\right|
    \notag
    \\
    & \lesssim \left\|\hat{e} - e\right\|_2 \left(\left\|\hat{\mu}_1 - \mu_1\right\|_2+ 
    \left\|\hat{\mu}_0 - \mu_0\right\|_2\right) + \|\tau_S - \hat{\tau}_S\|_{\infty}^{\alpha_2} + \left(\|\tau_S - \hat{\tau}_S\|_{\infty} + 
        |s_{1-\lambda} - \hat{s}_{1-\lambda}|\right)^{\beta_2}\notag
    \\
    & = D_{1, n} + D_{3, n} + D_{5, n}.\notag
\end{align}
As in previous proofs, an application of the standard CLT yields $(E_{I_{1,1}}-E_P)\left[\phi_{V, \lambda}(Y, A, X;\, \eta)\right] = O_P(n^{-1/2})$, which is asymptotically normal.
\end{proof}

\begin{proof}[ of Proposition 4]
The result follows from an argument identical to that of Proposition 1.
\end{proof}



\end{document}